\documentclass[runningheads]{llncs}

\usepackage{amsmath,amssymb,mathtools}
\usepackage{url}
\usepackage{xcolor}
\usepackage{tikz}
\usetikzlibrary{calc,positioning}
\usetikzlibrary{arrows.meta}
\usepackage[ruled,vlined,linesnumbered]{algorithm2e}

\SetKw{KwContinue}{continue}
\SetKw{KwRet}{return}
\usepackage{microtype}

\usepackage{orcidlink}
\hypersetup{hidelinks}
\providecommand{\orcidID}[1]{}
\renewcommand{\orcidID}[1]{\orcidlink{#1}}

\newcommand{\eps}{\varepsilon}
\newcommand{\RLE}{\mathsf{RLE}}
\newcommand{\RLEST}{\mathsf{RLEST}}
\newcommand{\RLESA}{\mathsf{RLESA}}
\newcommand{\RLELCP}{\mathsf{RLELCP}}
\newcommand{\lcp}{\mathsf{lcp}}

\newcommand{\lab}{\mathsf{lab}}
\newcommand{\pos}{\mathsf{pos}}
\newcommand{\pchar}{\mathsf{pchar}}
\newcommand{\plen}{\mathsf{plen}}
\newcommand{\per}{\mathsf{per}}
\newcommand{\expn}{\mathsf{exp}}
\newcommand{\spanstr}{\mathsf{span}}
\newcommand{\Lout}{\mathsf{Lout}}
\newcommand{\Rout}{\mathsf{Rout}}
\newcommand{\rev}[1]{#1}

\title{Compact Enumeration of Maximal Closed Substrings in Run-Length Encoded Strings}
\titlerunning{Compact Enumeration of MCSs in RLE Strings}

\author{
  Haruki~Umezaki\inst{1}
  \and
  Hiroki~Shibata\inst{2}\orcidID{0009-0006-6502-7476}
  \and
  Yuto~Nakashima\inst{3}\orcidID{0000-0001-6269-9353}
  \and 
  Shunsuke~Inenaga\inst{3}\orcidID{0000-0002-1833-010X}
}

\authorrunning{Umezaki, Shibata, Nakashima, Inenaga}

\institute{
  {Department of Information Science and Technology, Kyushu University} \\
  \email{umezaki.haruki.314@s.kyushu-u.ac.jp}
  \and
  {Joint Graduate School of Mathematics for Innovation, Kyushu University} \\
  \email{shibata.hiroki.753@s.kyushu-u.ac.jp}
  \and
  {Department of Informatics, Kyushu University} \\
  \email{\{nakashima.yuto.003, inenaga.shunsuke.380\}@m.kyushu-u.ac.jp} 
}

\begin{document}
\maketitle

\begin{abstract}
A string $w$ is \emph{closed} if $|w|=1$, or if $w$ has a non-empty proper border occurring only as its prefix and suffix.
A \emph{maximal closed substring} (\emph{MCS}) is a maximal occurrence of a closed string; equivalently, it is a \emph{maximal closed repeat} (\emph{MCR}).
We study the problem of enumerating all MCS occurrences directly from the \emph{run-length encoding} (\emph{RLE}) of a string.
For a string $T$ of length $n$ with RLE size $m$, we give a compact representation of all MCS occurrences whose worst-case size is $O(m^2)$, and show that this bound is tight for this representation.
Our approach is based on a characterization of MCS occurrences in terms of consecutive occurrences of their longest borders, together with data structures built on the RLE of $T$.
Denoting the resulting representation by $\mathcal{F}$, we compute it in
$O(m\log^2 m + |\mathcal{F}|\log m)$ time using $O(m)$ working space.
\end{abstract}

\section{Introduction}

Repeated substrings are among the most fundamental objects in string processing. Classical examples include squares, runs, maximal repeats, gapped repeats, and their algorithmic variants. A string is called \emph{closed} if it is of length one, or if it has a non-empty proper border that occurs in the string only as the prefix and the suffix. Badkobeh et al.~\cite{badkobeh2022mcs} introduced \emph{maximal closed substrings} (\emph{MCSs}), namely occurrences of closed substrings that cannot be extended to the left nor to the right into a longer closed substring. 
They showed that MCSs of length at least twice their smallest period are
exactly maximal repetitions (also called runs), while the other MCSs form
a subclass of maximal gapped repeats.
They also showed that all MCSs in a string of length $n$ can be computed in $O(n\log n)$ time with $O(n)$ space.
Their algorithm also implies an $O(n \log n)$ upper bound for the number of MCSs occurring in any string of length $n$~\cite{badkobeh2022mcs}.
Kosolobov~\cite{kosolobov2024closedrepeats} later introduced \emph{maximal closed repeats} (\emph{MCRs}) and showed a matching $\Omega(n \log n)$ lower bound on their number in a string of length $n$.

Closed strings have also been studied from the broader combinatorial viewpoint.
Badkobeh et al.~\cite{badkobeh2015closedfactors} proved, among other
results, that a length-$n$ string may contain $\Theta(n^2)$ distinct \rev{closed
substrings}.  More recently, Jain and Mhaskar~\cite{jain2025closedsubstrings} gave
a compact representation of all closed substrings and an alternative algorithm
for computing MCSs.  Their compact representation is for the uncompressed
setting; in contrast, the RLE-based compact representation studied here is
measured in terms of the RLE size $m$.  On the repeat side, non-periodic MCSs are closely related
to maximal gapped repeats and maximal pairs, whose combinatorial bounds and
enumeration algorithms have been studied extensively; see, e.g.,
Brodal et al.~\cite{brodal1999maximalpairs} and Gawrychowski et
al.~\cite{gawrychowski2018gapped}.  Our compressed-input viewpoint belongs to
the complementary line of direct processing of RLE strings, which includes
maximal repetitions~\cite{fujishige2017maximal_rle}, minimal absent
words~\cite{akagi2022maw_rle}, and longest unbordered
factors~\cite{sekizaki2025unbordered_rle}.

Motivated by this line of direct RLE processing, we ask whether MCS occurrences
can likewise be represented and enumerated without decompressing the input.
Given a string $T$ whose RLE $c_1^{e_1}\cdots c_m^{e_m}$ has size $m$,
with $c_i \ne c_{i+1}$ and $e_i \ge 1$, our goal is to describe and enumerate
the MCS occurrences of $T$ directly from its RLE.
The number of MCS occurrences, however, is not bounded by any function of $m$
alone.
For example, for $T=a^kba^k$, the RLE size is $m=3$, while
$a^i b a^i$ is an MCS occurrence for every $1\le i\le k$.
Thus, even for constant RLE size, the number of MCS occurrences can be
arbitrarily large.
We therefore seek a compact representation whose size is bounded in terms of
$m$.

Our starting point is a simple characterization of an MCS occurrence.
Consider an MCS occurrence $W$ of length greater than one, and let $X$ be its
longest border.
Since $X$ occurs in $W$ only as its prefix and suffix, these two occurrences
of $X$ are consecutive occurrences in $T$: there is no occurrence of $X$
starting strictly between them.
Conversely, two consecutive occurrences of a string $X$ determine a closed
substring spanning them, provided that the two occurrences satisfy the appropriate
left- and right-maximality conditions.
We call such a pair a \emph{valid pair}.
Thus, enumerating MCS occurrences can be reduced to enumerating valid pairs of
occurrences of their longest borders.
Our algorithms exploit the RLE structure by classifying valid pairs according
to the RLE length of the border.
The rest of this paper establishes the following result.

\begin{theorem}\label{thm:main}
  Given an RLE of size $m$ representing string $T$,
  a compact representation $\mathcal{F}$ of size $O(m^2)$
  that represents all MCS occurrences in $T$
  can be computed in $O(m\log^2 m+|\mathcal F|\log m)$ time with
  $O(m)$ working space.
\end{theorem}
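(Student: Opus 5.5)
We reduce MCS occurrences to valid pairs of consecutive occurrences of their longest borders, and then group the valid pairs by the run structure of the border. Let $W=T[i..j]$ with $|W|>1$ be an MCS occurrence with longest border $X$. Its prefix and suffix occurrences of $X$ are consecutive in $T$, and maximality of $W$ becomes a local condition: $T[i-1]\neq T[j-|X|]$ (left) and $T[j+1]\neq T[i+|X|]$ (right). First we prove the converse, that every such valid pair spans an MCS occurrence whose longest border is $X$, so that MCS occurrences of length greater than one are in bijection with valid pairs. Length-one MCSs are handled separately; they can be read off the runs directly (a single character is maximal closed only in limited situations), and this contributes $O(m)$ entries.

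\textbf{Case $X$ lies within one run, $X=c^\ell$.} Consecutive occurrences of $c^\ell$ inside a single run $c^{e}$ with $e\ge \ell+1$ are at distance one. These give the periodic MCSs, namely the runs $c^{e}$ themselves, one entry per RLE run. Otherwise the two occurrences lie in distinct runs $c_p^{e_p}$ and $c_q^{e_q}$ with $c_p=c_q=c$, and consecutiveness means no intermediate run of $c$ has length $\ge\ell$. So for fixed $p$, the valid $q$ is the next run of $c$ with exponent at least $\ell$. The maximality conditions then force the occurrences to be flush with run boundaries: the first ends at the end of run $p$ and the second starts at the start of run $q$, possibly up to one symmetric choice. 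For a fixed pair $(p,q)$, the admissible $\ell$ form an interval $(\max_{p<r<q,\,c_r=c} e_r,\ \min(e_p,e_q)]$. We therefore store each pair $(p,q)$ together with its $\ell$-interval as one entry of $\mathcal F$, which represents all $a^iba^i$-type families at once. There are $O(m^2)$ pairs, and the only pairs that matter are those where run $q$ is a ``next greater-or-equal'' run after $p$. Enumerating them per $p$ with a stack or range-maximum queries costs $O(\log m)$ per reported entry.

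\textbf{Case $X$ spans at least two runs.} Then $X=c^{a}Yd^{b}$, where $Y$ is an exact RLE factor, and every occurrence of $X$ is anchored at run boundaries. Its interior runs are determined, and only the first and last runs may be partial, with $a\le e$ of the preceding run and $b\le e'$ of the following run. An occurrence of $X$ therefore corresponds to an occurrence of the ``core'' $Y$ at a run boundary, and the two boundary exponents $a,b$ are free parameters. We model the sequence of RLE symbols $(c_i,e_i)$ as a string over an alphabet of size $O(m)$ and build its suffix tree. Two occurrences of $X$ are consecutive exactly when the corresponding core occurrences are consecutive among those core occurrences whose neighbouring exponents admit $a$ and $b$. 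Left and right maximality then reduces to a mismatch of the neighbouring runs, or to the exponents $a,b$ being tight.

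For each internal node of the suffix tree, with core $Y$, the valid pairs correspond to consecutive leaves in the sorted order of the position set, filtered by two exponent thresholds. We follow the standard maximal-pairs technique of Brodal et al., using smaller-half merging with balanced search trees keyed by position. The pairs that change status when the sets are merged are generated in $O(\log m)$ time each. Each generated pair, together with the rectangle of admissible $(a,b)$, becomes one entry of $\mathcal F$. Small-to-large merging yields $O(m\log m)$ structural events, and each costs an $O(\log m)$ search, which gives $O(m\log^2 m)$ plus $O(\log m)$ per output entry. To keep the working space at $O(m)$ we stream the output rather than store it.

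\textbf{Size bound and main obstacle.} For the size bound we show that each entry is charged to an ordered pair of runs, the run just after the left occurrence's first run and the run of the right occurrence. This injection gives $|\mathcal F|=O(m^2)$, and a family like $(ab)^{k}$-style run patterns with distinct exponents shows the bound is tight.

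The hardest step is the multi-run case. There, consecutiveness depends on the free boundary exponents $(a,b)$, so a single core pair may be consecutive for some choices of $(a,b)$ and not for others. Intermediate core occurrences with small neighbouring exponents block only part of the $(a,b)$-range. My first approach would be to prove a staircase lemma: for a fixed left occurrence, the set of $(a,b)$ for which a given right occurrence is next decomposes into $O(1)$ rectangles per pair of runs, charged injectively. That would preserve the $O(m^2)$ size and let the merging procedure emit each rectangle in $O(\log m)$ time.
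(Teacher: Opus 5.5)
Your reduction to valid pairs, the charging argument for $|\mathcal F|=O(m^2)$, and the unary-border case broadly follow the paper. The multi-run case, however, carries the whole $O(m\log^2 m+|\mathcal F|\log m)$ claim, and you leave it open: you end by proposing a ``staircase lemma'' you have not proved, and the framing around it is off. Take two core occurrences at run boundaries, with preceding $c$-runs of lengths $\plen_i,\plen_j$ and following $d$-runs of lengths $\mathsf{nlen}_i,\mathsf{nlen}_j$. Left and right maximality force $a=\min(\plen_i,\plen_j)$ and $b=\min(\mathsf{nlen}_i,\mathsf{nlen}_j)$: if $a$ were smaller than both, both occurrences of $X$ would be preceded by $c$. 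So the ``rectangle of admissible $(a,b)$'' is at most one point. The real task is to report, output-sensitively, all pairs of core occurrences with no intermediate core occurrence dominating that point in both exponents, which is a two-dimensional dominance-visibility problem. Brodal et al.'s smaller-half merging does not solve it. The partners of a queried leaf do not form a position range, blockers may come from the same child (even from the small set), and you must bound the candidates that fail maximality. In addition, the suffix tree over RLE symbols does not align children with right maximality. Next symbols $(f,2)$ and $(f,3)$ lie in different children yet are followed by the same character. Conversely, two occurrences followed by the same symbol $(d,b)$ lie in one child yet can be right-maximal when the characters after that run differ.

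The missing idea is to absorb the right partial run into the suffix-tree locus. Build the compact trie of the run-start suffixes of the \emph{expanded} string, with depths measured in characters ($\RLEST(T)$, built from Tamakoshi et al.'s RLE suffix array). Write $X=a^KY$, with $a^K$ the first run of $X$ and $Y$ allowed to end inside a run. Then right maximality is exactly ``the two leaves lie below distinct children of the explicit locus $v$ of $Y$'', and only $K$ remains free. With $K=\min(\plen(\ell),\plen(z))$, consecutiveness of the two occurrences of $a^KY$ is precisely horizontal visibility of $p_\ell,p_z$ in the one-dimensional point set $\{(\pos(y),\plen(y)) : y\in L(v),\ \pchar(y)=a\}$. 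The paper reports these pairs by repeated three-sided successor/predecessor queries in priority search trees merged smaller-to-larger. Wasted reports are bounded by the $O(q)$ edges of the horizontal visibility graph among same-child leaves, plus at most one left-maximality failure per side; periodic reports are bounded by the $O(m)$ maximal repetitions. A smaller point on the unary case: it is consecutiveness, not maximality, that forces the occurrences to be flush with the run boundaries. Maximality can fail at $\ell=\min(e_p,e_q)$ (e.g., $T=ba^2ba^2$ with $\ell=2$), so each family is $I_{p,q}$ or $I_{p,q}$ minus its right endpoint, not the full interval you state.
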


\section{Preliminaries}

Let $T$ be a string of length $n=|T|$ over an ordered alphabet $\Sigma$, and let $\sigma$ be the number of distinct characters occurring in $T$. We use the standard word-RAM model with word size $w=\Omega(\log n)$ and assume that each alphabet symbol fits in one machine word; arithmetic and comparisons on positions, run lengths, and alphabet symbols take $O(1)$ time. The notation $T[i]$ denotes the $i$th character of $T$ for $1 \le i \le n$. For $1 \le i \le j \le |T|$, let $T[i..j]$ denote the substring beginning at position $i$ and ending at position $j$.
We use the convention that $T[i..j]=\eps$ if $i>j$.
The substrings $T[1..j]$ and $T[i..|T|]$ are called prefixes and suffixes of $T$, respectively.
For two strings $X$ and $Y$,
let $\lcp(X, Y)$ denote the length of the longest common prefix of $X$ and $Y$.

A string $X$ is a \emph{border} of a string $W$ if $X$ is both a prefix and a suffix of $W$. A border is \emph{proper} if $X\ne\eps$ and $X\ne W$. A positive integer $p\le |W|$ is a \emph{period} of $W$ if $W[t]=W[t+p]$ for every $1\le t\le |W|-p$. For a non-empty string $W$, $\per(W)$ denotes its smallest period and $\expn(W)=|W|/\per(W)$ its exponent.

A string $W$ is \emph{closed} if $|W|=1$, or if there is a non-empty proper border $X$ of $W$ such that $X$ occurs in $W$ only as the prefix and the suffix. An occurrence $T[p..q]$ of a closed string is a \emph{maximal closed substring occurrence}, or \emph{MCS occurrence}, if $p=1$ or $T[p-1..q]$ is not closed, and $q=|T|$ or $T[p..q+1]$ is not closed~\cite{badkobeh2022mcs}.
A length-one occurrence $T[p..p]$ is an MCS if and only if each existing neighboring character differs from $T[p]$; such occurrences can be reported trivially.
Hence, in what follows, we consider only MCS occurrences $W$ with $|W|>1$.

Let $W$ be a closed string of length at least two, and let $X$ be the longest border of $W$.
Then, $X$ occurs in $W$ only as the prefix and the suffix.

For a non-empty string $X$, let
$p_1, \ldots, p_s$
be the list of starting positions of occurrences of $X$ in $T$
arranged in increasing order.
A pair $(p_t,p_{t+1})$ is called a \emph{consecutive occurrence pair} of $X$
for $1 \leq t < s$.
\begin{definition}[valid pair]\label{def:valid-pair}
Let $X$ be a non-empty string and let $i<j$ be two occurrences of $X$ in $T$. The triple $(X,i,j)$ is a \emph{valid pair} if
\begin{enumerate}
\item $i$ and $j$ form a consecutive occurrence pair of $X$;
\item $i=1$ or $T[i-1]\ne T[j-1]$;
\item $j+|X|-1=|T|$ or $T[i+|X|]\ne T[j+|X|]$.
\end{enumerate}
The substring spanned by $(X,i,j)$ is $\spanstr(X,i,j)=T[i..j+|X|-1]$.
We refer to Conditions~2 and~3 as the \emph{left-maximality} and \emph{right-maximality} conditions, respectively.
\end{definition}

The spanned substring forms an MCS.
Moreover, we have:
\begin{lemma}[Adapted from~\cite{badkobeh2022mcs,kosolobov2024closedrepeats}]\label{lem:valid-pair-mcs}
  There is a one-to-one correspondence between MCS occurrences of length greater than one and valid pairs, where an MCS occurrence is represented by its longest border and the two occurrences of this border as the prefix and the suffix.
\end{lemma}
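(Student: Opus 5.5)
I will define two maps and show they are mutually inverse. The map $\Phi$ sends an MCS occurrence $W=T[p..q]$ with $|W|>1$ to the triple $(X,p,q-|X|+1)$, where $X$ is the longest border of $W$. The map $\Psi$ sends a valid pair $(X,i,j)$ to $\spanstr(X,i,j)=T[i..j+|X|-1]$. Since $\Phi\circ\Psi$ and $\Psi\circ\Phi$ are identities once both maps are well defined, the main work is to show that $\Phi$ lands in valid pairs and $\Psi$ lands in MCS occurrences whose longest border is exactly $X$.

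\emph{$\Phi$ is well defined.} Let $X$ be the longest border of $W$. As noted above, $X$ occurs in $W$ only as prefix and suffix. Put $i=p$ and $j=q-|X|+1$. Any occurrence of $X$ starting strictly between $i$ and $j$ lies inside $W$, so none exists, and Condition~1 holds.

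For Condition~2, suppose $p>1$ and $T[i-1]=T[j-1]$. Then $T[p-1..q]$ has border $T[i-1]X$, since $T[j-1..q]=T[j-1]X$. I claim $T[p-1..q]$ is closed with this border, contradicting maximality. An inner occurrence of $cX$ in $T[p-1..q]$ (with $c=T[i-1]$) would contain an occurrence of $X$ starting strictly inside $W$ at a position other than $p$. Its start would lie in $(p,j)$ or after $j$, and the latter is impossible by length unless it is the suffix itself. Hence no such inner occurrence exists. Condition~3 is symmetric.

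\emph{$\Psi$ is well defined.} Let $(X,i,j)$ be valid and $W=T[i..j+|X|-1]$. If the two occurrences of $X$ overlap or abut, $X$ is still a border of $W$. By consecutiveness, $X$ occurs in $W$ only at offsets corresponding to $i$ and $j$, so $W$ is closed. Since any longer border $Y$ of $W$ has $X$ as a border, hence as a prefix, an occurrence of $Y$ as a suffix would give an occurrence of $X$ at position $q-|Y|+1\in(i,j)$. This is a contradiction, so $X$ is the longest border.

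For maximality, suppose $T[i-1..q]$ is closed with some border $Z$, where $|Z|\ge 1$. First consider $|Z|\le|X|$. Then $Z$ is a prefix of $T[i-1]X$ and a suffix of $X$, and it occurs only at the ends; I would argue this forces a contradiction by looking at where the suffix occurrence of $X$ shifted left sits. The cleaner route is: $Z$ minus its first character is a suffix of $W$ that is also a prefix of $X$. Next consider $|Z|>|X|$. Then $Z$ ends with $X$, and $Z$ minus its first character is a prefix of $W$ ending in an occurrence of $X$. By consecutiveness that occurrence must start at $i$ or $j$. If it starts at $i$, then $|Z|=|X|+1$ and $T[i-1]=T[j-1]$, contradicting Condition~2. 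If it starts at $j$, then $Z$ is all of $T[i-1..q]$, which is not proper.

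\emph{Main obstacle.} The case analysis for maximality in the reverse direction, specifically the case $|Z|\le|X|$, is the step I expect to be hardest. I would handle it by using that a closed string's borders interact well with its unique-occurrence border. If $T[i-1..q]$ is closed, its longest border $Z'$ occurs only at its ends. I then compare $Z'$ with $T[i-1]X$ via the Fine–Wilf-free argument above, following the proof in \cite{badkobeh2022mcs}.

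Finally, injectivity of $\Phi$ is immediate, because $(p,q)$ is recovered as $(i,j+|X|-1)$. Surjectivity follows from $\Phi(\Psi(X,i,j))=(X,i,j)$, which holds because $X$ is the longest border of the spanned string.
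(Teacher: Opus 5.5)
Your overall plan, two mutually inverse maps $\Phi$ and $\Psi$, is the right one. Note that the paper gives no proof of this lemma and only cites Badkobeh et al.\ and Kosolobov. Most of your verification is correct: $\Phi$ lands in valid pairs, and $\spanstr(X,i,j)$ is closed with longest border exactly $X$. Your treatment of a border $Z$ of $T[i-1..q]$ (with $q=j+|X|-1$) satisfying $|Z|>|X|$ is also sound; it uses only that $Z$ is a proper border, not that it occurs only at the ends. The genuine gap is the remaining case $|Z|\le|X|$, which you flag as the main obstacle but never settle. Your ``cleaner route'' fact, that $Z$ minus its first character is a prefix of $X$ and a suffix of $W$, is true but gives no contradiction. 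The ``Fine--Wilf-free argument above'' does not appear in your text, and deferring to the cited paper is not an argument. Without this case you have not shown that $\Psi$ lands in MCS occurrences, so your surjectivity claim $\Phi(\Psi(X,i,j))=(X,i,j)$ is unsupported.

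The missing step is short and needs neither longest borders nor periodicity. If $|Z|\le|X|$, then $Z$ is a suffix of $T[i-1..q]$, and the suffix of length $|X|$ of that string is the occurrence of $X$ at $j$. Hence $Z$ is a suffix of $X$. Copying it into the other occurrence of $X$, $Z$ occurs at position $i+|X|-|Z|\ge i$ and ends at $i+|X|-1<q$. This occurrence is neither the prefix occurrence of $T[i-1..q]$ (which starts at $i-1$) nor its suffix occurrence (which ends at $q$). So $Z$ cannot witness closedness. Together with your case $|Z|>|X|$, no non-empty proper border of $T[i-1..q]$ occurs only at its ends, so $T[i-1..q]$ is not closed.

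The right extension $T[i..q+1]$ is symmetric. A border of length at most $|X|$ is a prefix of $X$, so it reoccurs at $j$ and ends at or before $q$. A longer border forces, by consecutiveness, $|Z|=|X|+1$ and $T[i+|X|]=T[j+|X|]$, contradicting Condition~3.

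The idea you were missing is to take the side of $Z$ that lies inside one occurrence of $X$ and transport it into the \emph{other} occurrence. The side adjacent to the new character $T[i-1]$, which your ``cleaner route'' examines, cannot help, because Condition~2 says precisely that $T[i-1]\ne T[j-1]$.
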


\begin{definition}[horizontal visibility]\label{def:horizontal-visibility}
Let $P$ be a finite set of points in the plane with distinct $x$-coordinates. For each $p\in P$, let $x(p)$ denote its $x$-coordinate and let $h(p)$ denote its height (its $y$-coordinate). Two points $q,r\in P$ with $x(q)<x(r)$ are
\emph{visible} if
$\max\{h(p): x(q)<x(p)<x(r)\} < \min\{h(q),h(r)\}$.
The maximum over the empty set is defined as $-\infty$.  For a sequence
$h_1,\ldots,h_k$, its horizontal visibility graph is obtained by applying this
definition to the points $(1,h_1),\ldots,(k,h_k)$.
\end{definition}

This is the standard horizontal visibility graph introduced by
Luque et al.~\cite{luque2009hvg}.
See Fig.~\ref{fig:horizontal-visibility-graph} for a concrete example
of horizontal visibility graphs.
\begin{figure}[btph]
\centering
\begin{tikzpicture}[
  x=0.80cm,
  y=0.38cm,
  >=Latex,
  bar/.style={fill=gray!18, draw=gray!55, line width=0.35pt},
  point/.style={fill=black, draw=black, rounded corners=0.4pt},
  visedge/.style={draw=blue!65!black, line width=0.80pt, opacity=0.80},
  scaleaxis/.style={draw=black!70, line width=0.45pt},
  tick/.style={draw=black!65, line width=0.35pt}
]

\def\barw{0.34}

\draw[scaleaxis,->] (-0.85,0) -- (-0.85,8.75) node[above] {$h$};
\foreach \t in {0,1,...,8}{
  \draw[tick] (-0.90,\t) -- (-0.80,\t);
  \node[anchor=east, font=\scriptsize] at (-0.94,\t) {$\t$};
}

\draw[scaleaxis,->] (-0.15,0) -- (9.65,0);

\foreach \i/\h in {1/4,2/7,3/2,4/5,5/3,6/8,7/1,8/6,9/4,10/7}{
  \pgfmathsetmacro{\x}{\i-1}
  \filldraw[bar] ({\x-\barw/2},0) rectangle ({\x+\barw/2},\h);
  \node[point, minimum width=8pt, minimum height=4.8pt, inner sep=0pt]
    (p\i) at (\x,\h) {};
  \node[below, font=\scriptsize] at (\x,-0.22) {$p_{\i}$};
}

\foreach \i/\j in {1/2,2/3,2/4,2/6,3/4,4/5,4/6,5/6,6/7,6/8,6/10,7/8,8/9,8/10,9/10}{
  \draw[visedge] (p\i.center) -- (p\j.center);
}

\foreach \i/\h in {1/4,2/7,3/2,4/5,5/3,6/8,7/1,8/6,9/4,10/7}{
  \pgfmathsetmacro{\x}{\i-1}
  \node[point, minimum width=8pt, minimum height=4.8pt, inner sep=0pt]
    at (\x,\h) {};
}

\end{tikzpicture}
\caption{An example of a horizontal visibility graph. The histogram bar at each position $p_i$ has height $h_i$; two points are connected when no intermediate bar reaches height $\min\{h_i,h_j\}$. Blue edges are drawn as straight line segments connecting the black endpoint vertices.}
\label{fig:horizontal-visibility-graph}
\end{figure}

Horizontal visibility graphs are known to be outerplanar~\cite{gutin2011horizontal}.
Recall that a graph is outerplanar if it admits a plane embedding in which
all vertices lie on the boundary of the outer face.
We use the following standard property, for which we give a proof for completeness.

\begin{lemma}\label{lem:hvg-linear}
The horizontal visibility graph with $k \geq 2$ \rev{vertices} has at most $2k-3$ edges.
\end{lemma}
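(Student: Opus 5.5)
We argue by induction on $k$, charging edges to vertices. The points have distinct $x$-coordinates, but heights may coincide, so ties must be handled. With the strict inequality in Definition~\ref{def:horizontal-visibility}, two equal-height points $q,r$ can still see each other if every point between them is strictly lower. The base case $k=2$ is immediate: the two points are adjacent in $x$-order, the maximum over the empty set is $-\infty$, so they are visible and there is exactly $1=2\cdot2-3$ edge.

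For the inductive step with $k\ge 3$, we choose a point $p$ of minimum height. If several points attain the minimum, we take any one of them; the argument below does not depend on the choice. We claim $p$ has degree at most $2$. Suppose $p$ is visible from some $r$ with $x(r)>x(p)$, and let $p'$ be the immediate right neighbour of $p$ in $x$-order. If $r\ne p'$, then $p'$ lies strictly between $p$ and $r$. Visibility requires $h(p')<\min\{h(p),h(r)\}\le h(p)$, which contradicts the minimality of $h(p)$. So $p$ can see at most its immediate right neighbour on the right side, and symmetrically at most its immediate left neighbour on the left side. Hence $\deg(p)\le 2$.

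Next we delete $p$ and argue that the visibility graph of $P\setminus\{p\}$ contains every edge of the original graph not incident to $p$. Take any edge $qr$ with $q,r\ne p$. Removing $p$ can only shrink the set of intermediate points, so the maximum over intermediate heights can only decrease, and $qr$ remains visible. Consequently, the number of edges of the original graph is at most (edges of the graph on $k-1$ points) $+\deg(p)$. Removing $p$ may create new edges, but this only helps the inequality. By induction, the $k-1$ remaining points span at most $2(k-1)-3$ edges, so the total is at most $2(k-1)-3+2=2k-3$. The inductive hypothesis applies because $k-1\ge 2$.

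The only delicate point is the degree bound for the minimum point in the presence of equal heights. There, the strictness of the inequality in the definition is exactly what prevents $p$ from seeing past an equal-height neighbour. With that settled, the rest is routine. As an alternative route, one could invoke outerplanarity~\cite{gutin2011horizontal}, since outerplanar graphs on $k\ge2$ vertices have at most $2k-3$ edges. The direct induction above avoids depending on that reference.
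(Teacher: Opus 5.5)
Your proof is correct, but it takes a different route from the paper's.

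The paper first shows that no two visibility edges cross: $i<i'<j<j'$ would force both $h_{i'}<h_j$ and $h_j<h_{i'}$. It concludes that drawing the edges as arcs above the $x$-axis gives an outerplanar embedding. It then cites the fact that a maximal outerplanar graph on $k\ge 2$ vertices has exactly $2k-3$ edges.

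You instead peel off a lowest point $p$. You show that $p$ can see only its two $x$-neighbours, that deleting a point never destroys a visibility edge between the remaining points, and then you induct on $k$.

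What your argument buys: it is fully self-contained. It needs neither embeddings nor the edge count of maximal outerplanar graphs. One standard proof of that count is itself an induction on a vertex of degree at most $2$, and you locate that vertex explicitly as the lowest point. Your argument also handles equal heights cleanly, because the strict inequality in the definition blocks sight past an equal-height neighbour.

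What the paper's route buys: the stronger structural statement that the graph is outerplanar, which ties the lemma to the known result it cites.
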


\begin{proof}
For $k=2$, the claim is immediate.  For $k\ge3$,
let the points be $p_1,\ldots,p_k$ in increasing order of their
$x$-coordinates, and write $h_i$ for the height of $p_i$.
We first observe that no two edges joining horizontally visible pairs cross.
Suppose, to the contrary, that both $p_i,p_j$ and $p_{i'},p_{j'}$
are horizontally visible with $i<i'<j<j'$.
Since $p_i$ and $p_j$ are horizontally visible, we have
$h_{i'} < \min\{h_i,h_j\} \le h_j$.
On the other hand, since $p_{i'}$ and $p_{j'}$ are horizontally visible,
$h_j < \min\{h_{i'},h_{j'}\} \le h_{i'}$, a contradiction.
Thus, placing the vertices in the order
$p_1,\ldots,p_k$ on a line and drawing an edge between every horizontally visible pair as an arc above the line gives an outerplanar embedding.
Any outerplanar graph on $k \ge 2$ vertices can be augmented to a
maximal outerplanar graph, which has exactly $2k-3$ edges.
Hence the horizontal visibility graph has at most $2k-3$ edges.
\qed
\end{proof}

\section{Sparse Suffix Tree on RLE}\label{sec:sparse-st}

The \emph{run-length encoding} (\emph{RLE}) of a string $T$ is its unique
factorization $\RLE(T)=c_1^{e_1}\cdots c_m^{e_m}$, where $e_i\ge 1$ and
$c_i\ne c_{i+1}$ for $1\le i<m$.  Each factor $c_i^{e_i}$ is a \emph{run}, and
we write $\rho(T)=m$.  Let
$S_i=1+\sum_{k<i}e_k$ and $E_i=S_i+e_i-1$ be the starting and ending positions
of the $i$th run, and let $\mathcal S_T=\{S_i:1\le i\le m\}$.
For $2\le i\le m$, the cut between positions $E_{i-1}$ and $S_i$ is called an
\emph{RLE boundary} (or \emph{run boundary}); we identify this boundary with
its right-hand position $S_i$.  Thus the set of RLE boundaries is
$\mathcal B_T=\{S_i:2\le i\le m\}$.

Let $\$\notin\Sigma$ be a fresh terminal symbol.  The compact trie of the
terminal-marked run-start suffixes $\{T[S_i..n]\$:S_i\in\mathcal S_T\}$ is
called the \emph{RLE suffix tree} $\RLEST(T)$; edge lengths are measured in the
expanded string.  The \emph{RLE suffix array} $\RLESA(T)$ lists the positions
$S_i$ in lexicographic suffix order, and $\RLELCP(T)$ stores the LCP lengths of
adjacent suffixes in this order.  Tamakoshi et al.~\cite{tamakoshi2015rle_index}
showed that these arrays can be constructed in $O(m\log m)$ time and $O(m)$
space; $\RLEST(T)$ is then obtained in $O(m)$ time by the standard suffix-tree
construction from a suffix array and an LCP array.

For a node $v$ of $\RLEST(T)$, let
$\lab(v)$ denote its root-to-$v$ path-label.  The \emph{locus} of a non-empty
prefix $Y$ of a run-start suffix is the unique point at string depth $|Y|$; it is
\emph{explicit} if this locus is a node.

For each leaf $\ell_i$ corresponding to $T[S_i..|T|]\$$, we store
$\pos(\ell_i)=S_i$.  If $i>1$, we also store the character and length of the
preceding run, namely $\pchar(\ell_i)=c_{i-1}$ and $\plen(\ell_i)=e_{i-1}$; for
$i=1$ we set $\pchar(\ell_i)=\bot$ and $\plen(\ell_i)=0$, where
$\bot\notin\Sigma$ is a special symbol.  These values are obtained directly from the RLE representation.

\section{Classification and Family-Size Bounds}\label{sec:classification}

\begin{definition}[compact family representation]\label{def:family-representation}
We identify a substring occurrence $T[\ell..r]$ with its interval
$(\ell,r)$.  A \emph{family} is one of the following two types:
\begin{enumerate}
\item A \emph{singleton family} $\mathsf{S}(\ell,r)$ represents the
      single occurrence $(\ell,r)$.
\item Let $r<s$ be run indices with $c_r=c_s=a$, and let
      $J=[L,R]\subseteq[1..\min(e_r,e_s)]$ be a non-empty integer interval.
      A \emph{unary-border family} $\mathcal U_a(r,s,J)$ represents
      \[
        \{(E_r-t+1,\,S_s+t-1):t\in J\}.
      \]
\end{enumerate}
A collection $\mathcal F$ of such families \emph{represents all MCS
occurrences} if the union of the represented occurrence intervals is exactly
the set of all MCS occurrences of $T$.  We write $|\mathcal F|$ for the number
of families, not for the number of represented occurrences.  Each family can
be represented in $O(1)$ machine words.
\end{definition}

Recall that we consider only MCSs $W$ with $|W| > 1$.
Let $(X,i,j)$ be a valid pair, and let $W=\spanstr(X,i,j)$.
When $\expn(W)<2$, the two border occurrences do not overlap; hence
we can write $W=XUX$, where $U$ is the non-empty substring between them.
We classify MCS occurrences as follows.
\begin{definition}[types]\label{def:types}
  An MCS occurrence $W$ with longest border $X$ is assigned to one of the following five types. Whenever $\expn(W)<2$, we write $W=XUX$, where $U\ne\eps$ is the substring between the two non-overlapping occurrences of $X$.
\begin{description}
\item[Type 1]: $\rho(W)=1$.
\item[Type 2]: $\expn(W)<2$, $\rho(X)=1$, and $\rho(U)=1$.
\item[Type 3]: $\expn(W)<2$, $\rho(X)=1$, and $\rho(U)\ge 2$.
\item[Type 4]: $\expn(W)<2$ and $\rho(X)\ge 2$.
\item[Type 5]: $\rho(W)\ge 2$ and $\expn(W)\ge 2$.
\end{description}
\end{definition}

We call Types~2 and~3 the \emph{unary-border} cases, Type~4 the \emph{non-unary non-periodic} case, and Types~1 and~5 the periodic cases. This section analyzes the size of the compact representation for each type and also records the known RLE enumeration result for Types~1 and~5. Sections~\ref{sec:unary} and~\ref{sec:multi-run} enumerate Types~2--4.

\paragraph{Types 1 and 5.}
The MCS occurrences of Types 1 and 5 are exactly maximal repetitions~\cite{KolpakovK99}:
A substring $R=T[\ell..r]$ with smallest period $p$ is a maximal repetition if
$|R|\ge 2p$, and $\ell=1$ or $T[\ell-1]\ne T[\ell+p-1]$, and
$r=|T|$ or $T[r+1]\ne T[r-p+1]$.
Badkobeh et al.~\cite{badkobeh2022mcs} observed that MCSs with exponent at
least two are maximal repetitions.  Thus Types 1 and 5 can be computed by
algorithms for computing maximal repetitions on RLE strings.  Fujishige et
al.~\cite{fujishige2017maximal_rle} showed that all maximal repetitions in an
RLE string of size $m$ can be computed in $O(m\alpha(m))$ time and $O(m)$
space, where $\alpha$ denotes the inverse Ackermann function, and that their
number is $O(m)$.
Thus we have the following:
\begin{lemma}\label{lem:type5}
  The MCS occurrences of Type 1 and Type 5 contribute $O(m)$ singleton
  families, and can be computed in $O(m\alpha(m))$ time and $O(m)$ space.
\end{lemma}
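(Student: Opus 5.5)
The plan is to reduce Lemma~\ref{lem:type5} entirely to the result of Fujishige et al.~\cite{fujishige2017maximal_rle}. We show that the MCS occurrences of Types~1 and~5 coincide exactly with the maximal repetitions of $T$, then invoke their counting and enumeration bounds. Each maximal repetition $T[\ell..r]$ is output as the singleton family $\mathsf S(\ell,r)$.

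\textbf{Step 1: MCSs of Types 1 and 5 are maximal repetitions.} Let $W=T[p..q]$ be an MCS occurrence with $|W|>1$ of Type~1 or~5.

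For Type~1, $W=a^k$ with $k\ge 2$, so $\per(W)=1$ and $\expn(W)=k\ge2$. Here we use Lemma~\ref{lem:valid-pair-mcs}: $W$ corresponds to the valid pair $(a^{k-1},p,p+1)$. The left-maximality condition gives $p=1$ or $T[p-1]\ne T[p]=a$, and the right-maximality condition gives $q=|T|$ or $T[q+1]\ne a$. Hence $W$ is a full run, i.e.\ a maximal repetition with period~$1$.

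For Type~5, $\expn(W)\ge 2$. We then cite Badkobeh et al.~\cite{badkobeh2022mcs}, who showed that such MCSs are maximal repetitions.

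\textbf{Step 2: the converse.} Let $R=T[\ell..r]$ be a maximal repetition with $p=\per(R)$ and $|R|\ge 2p$. We take the border $X=T[\ell..r-p]$.
\begin{itemize}
\item The occurrences of $X$ at $\ell$ and $\ell+p$ are consecutive. An occurrence of $X$ at $\ell+d$ with $0<d<p$ would give $R[1..|X|]$ period $d$. Since $|X|\ge p\ge p+d-\gcd(p,d)$ in the relevant range, the Fine--Wilf argument would force a period of $R$ smaller than $p$, a contradiction.
\item Left- and right-maximality of the pair are exactly the maximal-repetition conditions at the two ends.
\end{itemize}
So $(X,\ell,\ell+p)$ is a valid pair spanning $R$, and by Lemma~\ref{lem:valid-pair-mcs} $R$ is an MCS occurrence. (In the correspondence $X$ should be the longest border; we check that the longest border of a string with smallest period $p$ has length $|R|-p$, which it does.) Then $R$ has $\expn\ge2$, so it is of Type~1 if $\rho(R)=1$ and of Type~5 otherwise.

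\textbf{Step 3: counting and time.} By~\cite{fujishige2017maximal_rle}, the number of maximal repetitions is $O(m)$, and they can all be listed in $O(m\alpha(m))$ time and $O(m)$ space from $\RLE(T)$. Emitting one singleton family per repetition adds only $O(1)$ work each. Length-one MCSs, reported trivially, are not part of this lemma.

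The only delicate point is Step~2's verification that the closest second occurrence of $X$ is at shift $p$, which is the periodicity-lemma argument above. Since the paper already attributes the equivalence to~\cite{KolpakovK99,badkobeh2022mcs}, I would cite that equivalence rather than reprove it in detail.
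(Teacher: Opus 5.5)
Your route is the same as the paper's. The paper asserts that Type~1 and Type~5 MCS occurrences are exactly the maximal repetitions, citing Kolpakov--Kucherov and Badkobeh et al. It then invokes Fujishige et al.\ for the $O(m)$ count and the $O(m\alpha(m))$-time, $O(m)$-space enumeration, with one singleton family per repetition. Your Step~2 usefully spells out the converse (every maximal repetition is an MCS occurrence), which the paper leaves implicit. The converse is needed so that $\mathcal F$ contains no spurious intervals, and your valid-pair construction and maximality checks there are correct.

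One step of that converse is wrong as written: the consecutiveness argument. An occurrence of $X$ at $\ell+d$ gives more than period $d$ on $R[1..|X|]$, and you need that extra strength. The inequality $p\ge p+d-\gcd(p,d)$ holds only when $d\mid p$. So when $|X|=p$ (i.e.\ $|R|=2p$), Fine--Wilf does not apply to $X$ alone. For example, $R=abaaba$ has $\per(R)=3$, and $X=aba$ has period $d=2$ but not period $1$; so ``$R[1..|X|]$ has period $d$'' yields no contradiction.

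The fix is to use what the hypothesis actually gives. The occurrences at $\ell$ and $\ell+d$ give period $d$ to $R[1..|X|+d]$, which also has period $p$ and length $|X|+d\ge p+d$. Fine--Wilf then yields period $g=\gcd(p,d)<p$ on a prefix of length at least $p$, and since $g\mid p$ this propagates to all of $R$, contradicting $\per(R)=p$. Equivalently, $R[1..p]$ is primitive and $X$ begins with it, so an occurrence of $X$ at shift $0<d<p$ would place $R[1..p]$ strictly inside $R[1..p]^2$. With that repair your proposal is complete.
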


\paragraph{Types 2 and 3.}
For unary-border MCSs, fix a character $a$.  Let $r_1<\cdots<r_k$ be the run indices in $\RLE(T)$ satisfying $c_{r_p}=a$, and put $h_p=e_{r_p}$. For such a run index $r$, let $\Lout(r)$ and $\Rout(r)$ be the characters immediately outside that run, using two distinct sentinels $\#_L,\#_R\notin\Sigma$ at the left and right ends of $T$, respectively.  For $p<q$, let
$M_{p,q}=\max_{p<s<q} \{h_s\}$, with $M_{p,q}=0$ if $q=p+1$.  The two boundary
occurrences of $a^t$ in the \rev{$p$-th and $q$-th $a$-runs}
are consecutive exactly for
$M_{p,q}<t\le \min\{h_p,h_q\}$.  Hence the admissible values of $t$ first form the
interval
$I_{p,q}=[M_{p,q}+1,\min\{h_p,h_q\}]$.  In the sense of
Definition~\ref{def:horizontal-visibility} applied to the height sequence
$h_1,\ldots,h_k$, the pair $(p,q)$ is visible exactly when this interval is
non-empty.

\begin{lemma}\label{lem:unary-family}
Let $I_{p,q}=[L,R]$ be non-empty. The values $t$ for which the two occurrences of $a^t$ in the \rev{$p$-th and $q$-th $a$-runs} form a valid pair form either $I_{p,q}$ or $I_{p,q}\setminus\{R\}$. If this set is non-empty, it is an interval $J$ represented by the unary-border family $\mathcal U_a(r_p,r_q,J)$ of Definition~\ref{def:family-representation}.
\end{lemma}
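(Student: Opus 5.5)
The plan is to fix $t\in I_{p,q}=[L,R]$ and check the three conditions of Definition~\ref{def:valid-pair} for $X=a^t$ at the two boundary occurrences $i=E_{r_p}-t+1$ (suffix of the $p$-th $a$-run) and $j=S_{r_q}$ (prefix of the $q$-th $a$-run). These are exactly the positions that appear in $\mathcal U_a(r_p,r_q,\cdot)$ of Definition~\ref{def:family-representation}, since $j+|X|-1=S_{r_q}+t-1$.

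Condition~1 (consecutiveness) holds for all $t\in I_{p,q}$ by the discussion preceding the lemma. The reason is that an occurrence of $a^t$ starting strictly between $i$ and $j$ must lie inside some $a$-run $r_s$ with $p<s<q$, and such an occurrence exists iff $h_s\ge t$. Occurrences inside the $p$-th or $q$-th run other than $i$ and $j$ lie outside the interval $(i,j)$, because $i$ is the last start in run $r_p$ and $j$ is the first start in run $r_q$.

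Next comes left-maximality. The character before $j$ is $\Lout(r_q)\neq a$. The character before $i$ is $a$ if $t<h_p$, and $\Lout(r_p)\neq a$ (or $i=1$) if $t=h_p$. So for $t<h_p$ the condition holds automatically. At $t=h_p$ it holds iff $i=1$ or $\Lout(r_p)\ne\Lout(r_q)$, which is a property independent of $t$.

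Right-maximality is symmetric. The character after the occurrence at $i$ is $\Rout(r_p)\ne a$. The character after the occurrence at $j$ is $a$ if $t<h_q$, and $\Rout(r_q)$ (or the end of $T$) if $t=h_q$. So the condition holds for $t<h_q$. At $t=h_q$ it holds iff $j+t-1=|T|$ or $\Rout(r_p)\ne\Rout(r_q)$. Here I would double-check the sentinel convention, so that ``end of $T$'' is treated uniformly with $\#_R$, and likewise on the left with $\#_L$.

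Hence every $t<R=\min\{h_p,h_q\}$ is valid, and only $t=R$ can fail, namely when $R=h_p$ and left-maximality fails, or when $R=h_q$ and right-maximality fails. When $h_p=h_q$ both checks apply at $R$. So the valid set is $I_{p,q}$ or $I_{p,q}\setminus\{R\}$. Either way it is an integer interval $J\subseteq[1..\min(e_{r_p},e_{r_q})]$, since $L\ge1$ and $R=\min\{h_p,h_q\}$. If $J$ is non-empty, it matches the form of $\mathcal U_a(r_p,r_q,J)$.

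The main obstacle is the careful case analysis at $t=R$, particularly the endpoint and sentinel cases and the case $h_p=h_q$. The rest is routine.
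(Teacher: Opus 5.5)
Your proposal is correct and follows essentially the same route as the paper's proof. Both arguments reduce consecutiveness to the absence of an intermediate $a$-run of length at least $t$. Both observe that for $t<R$ each maximality condition compares an $a$ with a non-$a$ character and therefore holds. Both conclude that only $t=R$ can fail: left maximality when $h_p=R$ and $\Lout(r_p)=\Lout(r_q)$, and right maximality when $h_q=R$ and $\Rout(r_p)=\Rout(r_q)$. Your explicit treatment of the text ends via the sentinels $\#_L,\#_R$ is simply a more careful write-up of the same case analysis.
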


\begin{proof}
The interval condition is precisely the absence of an intermediate $a$-run of
length at least $t$.  If $t<R$, then $t<h_p,h_q$.  Thus the first occurrence
is preceded by $a$ and followed by a non-$a$ character, while the second is
preceded by a non-$a$ character and followed by $a$; hence both maximality
conditions hold.
Only $t=R$ can fail maximality: left maximality fails exactly when
$h_p=R$ and $\Lout(r_p)=\Lout(r_q)$, and right maximality fails exactly
when $h_q=R$ and $\Rout(r_p)=\Rout(r_q)$.
Hence, after possibly removing $R$, the \rev{remaining values of $t$} form an interval.
If this interval is non-empty, it is represented by the unary-border family
$\mathcal U_a(r_p,r_q,J)$.
\qed
\end{proof}

\begin{lemma}\label{lem:unary-number}
  All Type 2 and Type 3 MCS occurrences can be represented by $O(m)$
  unary-border families.
\end{lemma}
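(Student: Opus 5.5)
Fix a character $a$ and work with the $a$-runs $r_1<\cdots<r_k$ and their heights $h_1,\ldots,h_k$, as in the discussion preceding Lemma~\ref{lem:unary-family}. The plan is to charge every Type~2 or Type~3 MCS occurrence with border $a^t$ to a visible pair $(p,q)$ of this height sequence. Lemma~\ref{lem:hvg-linear} then bounds the number of such pairs, and we sum over characters.

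\emph{Step 1: every unary-border MCS arises from a visible pair.} Let $W$ be a Type~2 or Type~3 MCS occurrence with longest border $X$. Then $\rho(X)=1$, so $X=a^t$ for some character $a$ and $t\ge1$. By Lemma~\ref{lem:valid-pair-mcs}, $W$ corresponds to a valid pair $(a^t,i,j)$. Since $\expn(W)<2$, the two occurrences of $a^t$ do not overlap, and the gap $U$ between them is non-empty.

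Left maximality gives $i=1$ or $T[i-1]\ne T[j-1]$. Right maximality gives $j+t-1=|T|$ or $T[i+t]\ne T[j+t]$.

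We first show that the occurrence at $i$ ends at the last position of its $a$-run. Suppose not, so $T[i+t]=a$. Then the occurrence at $i+1\le j$ is an occurrence of $a^t$. If $i+1<j$, this contradicts consecutiveness. If $i+1=j$, the two occurrences overlap, contradicting $\expn(W)<2$. So the first occurrence is flush with the right end of its run, say run $r_p$.

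Symmetrically, the occurrence at $j$ starts at $S_{r_q}$ for its run $r_q$: otherwise $j-1$ would be an occurrence of $a^t$ strictly between $i$ and $j$, or would coincide with $i$ and force overlap.

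Hence the two occurrences are exactly the boundary occurrences of $a^t$ in the $p$-th and $q$-th $a$-runs, with $p<q$. Note that $p\ne q$, because the gap $U$ contains a non-$a$ character. Consecutiveness forces $t\in I_{p,q}$, so $(p,q)$ is visible in the sense of Definition~\ref{def:horizontal-visibility}. By Lemma~\ref{lem:unary-family}, all valid $t$ for this pair form a single family $\mathcal U_a(r_p,r_q,J)$.

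\emph{Step 2: counting.} For each character $a$ with $k_a$ runs, the number of visible pairs is at most $2k_a-3$ by Lemma~\ref{lem:hvg-linear} when $k_a\ge2$, and $0$ when $k_a=1$. So there are at most $\max(0,2k_a-3)$ families for $a$. Since $\sum_a k_a=m$, the total is at most $2m$ families.

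Types~2 and~3 are exactly the occurrences with $\rho(X)=1$ and $\expn(W)<2$. Step~1 shows each such occurrence is represented. Conversely, each family contains only valid pairs, hence only MCS occurrences by Lemma~\ref{lem:valid-pair-mcs}.

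Whether a family covers exactly Type~2/3 occurrences needs one more check: any valid pair whose border is unary and whose occurrences lie at distinct runs $r_p\ne r_q$ has a non-$a$ character between them. It is therefore non-overlapping, and its span $W$ has $\rho(W)\ge3$, so $W$ is of Type~2 or~3. Thus the union of the families is exactly the set of unary-border MCS occurrences.

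\emph{Main obstacle.} The crux is the flushness argument in Step~1, that both border occurrences sit at run boundaries. It rests on combining consecutiveness with the non-overlap condition $\expn(W)<2$, and the delicate case is $j=i+1$. Everything else follows directly from the earlier lemmas.
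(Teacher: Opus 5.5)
Your proof is correct and follows the same route as the paper. Each unary-border family is charged to a visible pair in the height sequence of the $a$-runs; Lemma~\ref{lem:unary-family} gives at most one family per pair, Lemma~\ref{lem:hvg-linear} gives the linear edge bound, and you sum over characters. Your Step~1 flushness argument, that both border occurrences lie at run boundaries, spells out a fact the paper's proof leaves implicit in the discussion preceding Lemma~\ref{lem:unary-family}.
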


\begin{proof}
For each $a \in \Sigma$, non-empty intervals are exactly the edges of the
horizontal visibility graph of $h_1,\ldots,h_k$ from
Definition~\ref{def:horizontal-visibility}, which has $O(k)$ edges by
Lemma~\ref{lem:hvg-linear}.  By Lemma~\ref{lem:unary-family}, each such edge gives at most
one unary-border family.  Summing over all characters gives $O(m)$ families,
since the character classes partition the runs.
\qed
\end{proof}

\paragraph{Type 4.}
For the non-periodic cases with $\rho(X)\ge2$, singleton families are enough.

\begin{lemma}\label{lem:boundary-charge}
The number of valid pairs $(X,i,j)$ with $\rho(X)\ge 2$ is $O(m^2)$.
\end{lemma}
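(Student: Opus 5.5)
The plan is to charge each valid pair $(X,i,j)$ with $\rho(X)\ge 2$ to a pair of RLE boundaries in $\mathcal B_T$. I would then show that this charging is injective, which immediately gives at most $\binom{m-1}{2}=O(m^2)$ such valid pairs.

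\textbf{The charging map.} Since $\rho(X)\ge 2$, write $X=a^dY$, where $a^d$ is the first run of $X$ (so $d\ge 1$), and $Y$ is non-empty and begins with a character $\ne a$. In the occurrence at $i$, the position $b=i+d$ has $T[b-1]=a\ne T[b]$, so $b\in\mathcal B_T$. Likewise $b'=j+d\in\mathcal B_T$, and $b<b'$ because $i<j$. I map $(X,i,j)$ to the pair $(b,b')$. Note that this step only uses that $X$ contains a run boundary; Condition~1 (consecutiveness) will not be needed.

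\textbf{Injectivity.} Fix $(b,b')$; I will show it determines $(X,i,j)$.
\begin{itemize}
\item \emph{Recovering $d$.} Let $\lambda$ and $\lambda'$ be the lengths of the runs ending at positions $b-1$ and $b'-1$; these runs both consist of the character $a=T[b-1]=T[b'-1]$. Necessarily $d\le\min\{\lambda,\lambda'\}$. Suppose $d<\min\{\lambda,\lambda'\}$. Then $i>1$ and $T[i-1]=T[j-1]=a$, which violates left-maximality (Condition~2). Hence $d=\min\{\lambda,\lambda'\}$.
\item \emph{Recovering $i$ and $j$.} Given $d$, we get $i=b-d$ and $j=b'-d$.
\item \emph{Recovering $|X|$.} $Y$ is a common prefix of $T[b..|T|]$ and $T[b'..|T|]$. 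Right-maximality (Condition~3) states that either $j+|X|-1=|T|$ or the next characters differ. In both cases $Y$ cannot be extended, so $|Y|=\lcp(T[b..|T|],T[b'..|T|])$. This determines $|X|=d+|Y|$.
\end{itemize}
Thus each pair $(b,b')$ receives at most one valid pair, and the count of valid pairs with $\rho(X)\ge 2$ is at most the number of pairs $b<b'$ in $\mathcal B_T$, which is $O(m^2)$.

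\textbf{Main obstacle.} The delicate step is recovering $d$, i.e., pinning down the left end of $X$. One must check that the two preceding runs really carry the same character $a$, which holds by construction. One must also handle the case where the run before $b$ starts at position $1$. In that case $d=\lambda$ is allowed with $i=1$, and it is still the minimum because $d\le\lambda'$ is forced and $d<\lambda'$ with $i>1$ is excluded as above. The right end causes no difficulty, since maximality pins it to the lcp.
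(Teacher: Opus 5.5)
Your proof is correct and follows essentially the same route as the paper: you charge each valid pair to the boundary pair $(i+d,\,j+d)$ given by the first run change inside the two border occurrences, and then use the left- and right-maximality conditions (not consecutiveness) to recover $X$, $i$, $j$ uniquely from that pair. The only cosmetic difference is in recovering the left end: you take $\min\{\lambda,\lambda'\}$ from the preceding run lengths, whereas the paper uses the maximal common extension to the left of the two boundaries.
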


\begin{proof}
Let $(X,i,j)$ be such a valid pair and put $d=j-i$. Since $\rho(X)\ge 2$, the first occurrence of $X$ contains an RLE boundary corresponding to a character change within $X$. Let $\beta$ be the leftmost such boundary position, so that $i<\beta\le i+|X|-1$. The corresponding position $\beta+d$ is then an RLE boundary in the second occurrence. Charge the pair to the ordered boundary pair
$(\beta,\beta+d)$.

Fix an ordered boundary pair $(\beta,\beta')$, and let $d=\beta'-\beta$.  Let $L$ be the
maximum number of characters that match immediately to the left of $\beta$ and
$\beta'$, and let $R$ be the maximum number of characters that match starting at
$\beta$ and $\beta'$, respectively.  Thus the two aligned substrings agree exactly on
$T[\beta-L..\beta+R-1]=T[\beta'-L..\beta'+R-1]$.  If a valid pair is charged to $(\beta,\beta')$,
then Conditions~2 and 3 of Definition~\ref{def:valid-pair} force
$i=\beta-L$, $|X|=L+R$, and $j=i+d$.  Hence $(\beta,\beta')$ uniquely determines
$(X,i,j)$.  There are $O(m^2)$ ordered pairs of RLE boundaries.
\qed
\end{proof}

\begin{theorem}\label{thm:lower-bound}
There are RLE strings of size $m$ that contain $\Omega(m^2)$ Type 4 MCS occurrences.
\end{theorem}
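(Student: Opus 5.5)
The plan is to build an explicit family of strings with $O(k)$ runs and exhibit $k^2$ distinct valid pairs $(X,i,j)$ whose border is $X=a^sba^t$, so that $\rho(X)=3\ge2$. By Lemma~\ref{lem:valid-pair-mcs} these give $k^2$ distinct MCS occurrences, and they will be of Type~4 as long as the two border occurrences are disjoint. The string uses the characters $a,b,c$. Each $b$ is a ``marker'' flanked by two $a$-runs, and the $c$'s separate consecutive markers. Concretely,
\[
T=\prod_{p=1}^{2k} a^{x_p}\,b\,a^{y_p}\,c ,
\]
which has $m=\Theta(k)$ runs. An occurrence of $a^sba^t$ must be centred on some marker $b$, and this happens exactly when $x_p\ge s$ and $y_p\ge t$. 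So the occurrences of $a^sba^t$ correspond to the markers $p$ that dominate $(s,t)$.

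For markers $p<q$, take $s=\min(x_p,x_q)$ and $t=\min(y_p,y_q)$. The pair is then consecutive if and only if no intermediate marker $r$ satisfies $x_r\ge s$ and $y_r\ge t$. I want all $k^2$ ``cross'' pairs to be empty in this sense. I split the markers into a first half $p=1..k$ with $(x_p,y_p)=(k+p,\,k+1-p)$ and a second half $q=1..k$ with $(x_{k+q},y_{k+q})=(q,\,2k+q)$. Consider a cross pair $(p,k+q)$. It has $s=q$ and $t=k+1-p$.

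\textbf{First-half intermediates.} An intermediate $r>p$ from the first half has $y_r<y_p=t$, so it does not dominate $(s,t)$.

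\textbf{Second-half intermediates.} An intermediate $k+r$ with $r<q$ has $x_{k+r}=r<q=s$, so it does not dominate either.

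Hence the two occurrences of $X=a^qba^{k+1-p}$ are consecutive.

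Next I check the maximality conditions of Definition~\ref{def:valid-pair}.

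\textbf{Left maximality.} In the first occurrence, $a^q$ sits strictly inside the run $a^{k+p}$ because $k+p>q$, so it is preceded by $a$. In the second occurrence, $a^q$ is the whole run, so it is preceded by $c$.

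\textbf{Right maximality.} In the first occurrence, $a^{k+1-p}$ is the whole run $a^{y_p}$, so it is followed by $c$. In the second occurrence, $y_{k+q}=2k+q>k+1-p$, so it is followed by $a$.

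So every cross pair is valid. The two occurrences lie around different markers, so they are disjoint and $\expn(W)<2$. Together with $\rho(X)\ge2$, this makes each resulting MCS of Type~4.

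Distinct pairs $(p,q)$ give distinct starting positions or distinct borders, so they give distinct valid pairs. By the bijection of Lemma~\ref{lem:valid-pair-mcs}, they are distinct MCS occurrences. This yields $k^2=\Omega(m^2)$ Type~4 MCSs. The main obstacle is the design step: the run heights must be chosen so that no intermediate marker dominates the componentwise minimum, for quadratically many pairs at once, while the strict and equal inequalities still give exactly the needed character mismatches at both ends. I expect that the verification above, with the two halves ``crossing'' in height, is where care with off-by-one inequalities is required.
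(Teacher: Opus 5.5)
Your construction is correct and uses the same idea as the paper's. Two halves of blocks have crossing run lengths, so for each of the $k^2$ cross pairs no intermediate block dominates the componentwise minimum. Maximality comes from one occurrence filling a whole run while the other sits strictly inside it. The differences are the gadget and the separator. The paper uses the two-run border $a^{L}b^{R}$ with pairwise distinct separators $\delta_p$, and gets $\expn(W)<2$ from the uniqueness of $\delta_i$ in $W$. You use the three-run border $a^sba^t$ anchored at a single $b$, with one fixed separator $c$. As a result your lower bound holds over the ternary alphabet $\{a,b,c\}$, which the paper's conclusion leaves open.

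The cost is that you can no longer argue through a unique letter, and your sentence ``they are disjoint and $\expn(W)<2$'' skips the actual reason. Disjointness alone allows $W=XX$, whose exponent is exactly $2$. Spell it out as follows. Since the two occurrences of $X$ are consecutive, $X$ is the longest border of $W$, so $\per(W)=j-i$. Since $X$ contains no $c$ and a $c$ lies between the two occurrences, $j-i>|X|$. Hence $\expn(W)=(j-i+|X|)/(j-i)<2$. This same argument shows the paper's $\delta_p$ could all be equal.

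One further point matters only if you reuse your construction beyond this statement. The paper's proof notes that its spans $W_{i,q}$ start with $a$ and end with $b$, so none lies in a unary-border family; it needs this for the optimality claim of Theorem~\ref{thm:family-size}. Your spans start and end with $a$, so that one-line argument does not transfer. Instead, observe that a family $\mathcal U_a(r,s,J)$ containing the span for $(p,q)$ would need $t=q$ at the left end and $t=k+1-p$ at the right end. Hence at most $k$ of your $k^2$ spans can be covered by unary-border families.
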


\begin{proof}
Let $h\ge 1$. Consider a string consisting of $2h$ blocks
\[
T_h=(a^{L_1}b^{R_1}\delta_1)\cdots(a^{L_{2h}}b^{R_{2h}}\delta_{2h}),
\]
where all $\delta_p$ are distinct and different from $a,b$.
For $1\le i\le h$, let
$L_i=h-i+1$ and $R_i=h+1$.
For $1\le q\le h$, let
$L_{h+q}=h+1$ and $R_{h+q}=q$.
For each pair $(i,q)\in[1..h]^2$, let $X_{i,q}=a^{h-i+1}b^q$.

The string $X_{i,q}$ occurs at the $a|b$ boundary of block $i$ and at the
$a|b$ boundary of block $h+q$.  No intermediate block contains $X_{i,q}$:
among the later blocks of the first half, the $a$-run is too short, whereas
among the earlier blocks of the second half, the $b$-run is too short.
Thus these two occurrences are consecutive.  Conditions~2 and~3 of
Definition~\ref{def:valid-pair} also hold.  Indeed, at block $i$ the occurrence
starts at the left boundary of its $a$-run, whereas at block $h+q$ it starts
strictly inside the run, so the preceding characters differ.  Symmetrically,
at block $h+q$ it ends at the right boundary of its $b$-run, whereas at block
$i$ it ends strictly inside the run, so the following characters differ.

Let $W_{i,q}$ be the substring spanned by these two occurrences.  The separator
$\delta_i$ occurs in $W_{i,q}$ exactly once.  If $W_{i,q}$ had a period
$p\le |W_{i,q}|/2$, then at least one of the positions at distance $p$ from
this occurrence of $\delta_i$ would still lie inside $W_{i,q}$.  Periodicity
would force that position to contain $\delta_i$ as well, contradicting
uniqueness.  Therefore $\expn(W_{i,q})<2$.
Hence the $h^2$ pairs yield distinct Type 4 MCS occurrences, and since the RLE
size is $\Theta(h)$, the lower bound is $\Omega(m^2)$.  Moreover, every
$W_{i,q}$ begins with $a$ and ends with $b$, where $a\ne b$, whereas every
occurrence represented by a unary-border family begins and ends with the
same character.  Thus none of these $h^2$ occurrences can be represented by a
unary-border family, and each requires its own singleton family.
\qed
\end{proof}

For a concrete example of Theorem~\ref{thm:lower-bound},
consider the case where $h=3$.
The six blocks have run-length pairs
\[
\underbrace{(3,4),(2,4),(1,4)}_{\text{first half}}
\qquad
\underbrace{(4,1),(4,2),(4,3)}_{\text{second half}},
\]
that is,
\[
T_3=(a^3b^4\delta_1)(a^2b^4\delta_2)(ab^4\delta_3)
    (a^4b\delta_4)(a^4b^2\delta_5)(a^4b^3\delta_6).
\]
For $(i,q)=(2,3)$, the border $X_{2,3}=a^2b^3$ occurs at the
$a|b$ boundaries of blocks $2$ and $6$.  No intermediate block has both
enough $a$'s and enough $b$'s, while the first occurrence starts at the left boundary of its $a$-run but the second does not, and the second ends at the right boundary of its $b$-run but the first does not; this illustrates consecutiveness and maximality.

Combining Lemmas~\ref{lem:type5}, \ref{lem:unary-number}, and
\ref{lem:boundary-charge} with the $O(m)$ length-one MCS occurrences and
Theorem~\ref{thm:lower-bound}, we obtain the following.

\begin{theorem}\label{thm:family-size}
For any string $T$ of RLE size $m$, all MCS occurrences of $T$ can be
represented by $O(m^2)$ compact families. This bound is worst-case optimal.
\end{theorem}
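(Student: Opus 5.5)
The plan is to assemble the families type by type and then invoke the lower bound. First, every MCS occurrence of $T$ falls into exactly one class. It is either a length-one occurrence, or it has length greater than one and falls into exactly one of Types~1--5 of Definition~\ref{def:types}. To see that the types are exhaustive and disjoint, note that $\rho(W)=1$ is Type~1. Otherwise $\rho(W)\ge2$, and we split on whether $\expn(W)\ge2$ (Type~5) or $\expn(W)<2$. In the latter case the two border occurrences are disjoint, $W=XUX$ with $U\neq\eps$, and we split further on $\rho(X)$ and $\rho(U)$ into Types~2--4.

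If $\rho(X)=1$ and $\expn(W)<2$, then $\rho(W)\ge2$ is automatic unless $U$ consists of the same letter as $X$. In that case $W$ would be unary with $\rho(W)=1$, which is already Type~1. So I will simply assign each occurrence to the first type in the listed order whose conditions it satisfies. Since we only need a covering union, overlaps between classes are harmless: the union of all families just has to equal the set of MCS occurrences.

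Next, I will bound each class.
\begin{itemize}
\item Length-one MCS occurrences lie at the positions of runs of length one whose neighbours differ. Hence there are at most $m$ of them, each a singleton family.
\item Types~1 and~5 contribute $O(m)$ singleton families by Lemma~\ref{lem:type5}.
\item Types~2 and~3 are covered by $O(m)$ unary-border families by Lemma~\ref{lem:unary-number}. These families represent exactly the valid pairs with unary border, by Lemma~\ref{lem:unary-family} and the correspondence of Lemma~\ref{lem:valid-pair-mcs}. They may also include Type~1 occurrences such as $a^t a^s a^t$ only if these come from valid pairs; every such occurrence is a genuine MCS, so nothing spurious is added.
\item Type~4 occurrences correspond injectively, via Lemma~\ref{lem:valid-pair-mcs}, to valid pairs with $\rho(X)\ge2$. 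There are $O(m^2)$ of these by Lemma~\ref{lem:boundary-charge}, and each becomes a singleton family.
\end{itemize}
Summing gives $|\mathcal F|=O(m)+O(m)+O(m)+O(m^2)=O(m^2)$.

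For optimality, Theorem~\ref{thm:lower-bound} exhibits strings of RLE size $\Theta(h)$ with $h^2$ Type~4 occurrences. The same proof shows that none of them is representable by a unary-border family, since each begins and ends with different letters. Within this representation, each must therefore be covered by a distinct singleton family, so any valid $\mathcal F$ has size $\Omega(m^2)$.

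The only delicate point is verifying that the union is exactly the MCS set: every represented interval must be an MCS, and every MCS must be covered. Both directions reduce to Lemma~\ref{lem:valid-pair-mcs} together with the exhaustive case split. I expect this bookkeeping on the overlap between Type~1 and the unary-border families to be the main thing to check carefully; the counting itself is routine.
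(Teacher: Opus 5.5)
Your proposal is correct and follows the paper's argument: it combines Lemmas~\ref{lem:type5}, \ref{lem:unary-number}, and \ref{lem:boundary-charge} with the $O(m)$ length-one occurrences, and takes optimality from Theorem~\ref{thm:lower-bound}, using the observation that unary-border families only cover occurrences beginning and ending with the same letter. The overlap cases you single out as delicate cannot actually occur. A unary $W$ with $|W|\ge 2$ has $\expn(W)=|W|\ge 2$, so it never satisfies $\expn(W)<2$. Likewise, every unary-border family spans a non-$a$ run lying between two distinct $a$-runs, so no Type~1 occurrence ever appears in one.
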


Non-periodic MCSs of Types 2--4 have the form $W=XUX$, where $X$ is the
longest border and $U\ne\varepsilon$.  Sections~\ref{sec:unary}
and~\ref{sec:multi-run} enumerate the unary-border families and the remaining
non-unary Type~4 families, respectively.  The latter case uses horizontal
visibility (Definition~\ref{def:horizontal-visibility}) on point sets \rev{associated
with loci in the sparse suffix tree}.

\section{Enumeration of Non-Periodic MCSs with Unary Borders}\label{sec:unary}

We now enumerate the Type 2 and Type 3 families described in
Lemma~\ref{lem:unary-family}.  The proof of Lemma~\ref{lem:unary-number} is
constructive: for each character $a$, it suffices to list the visible pairs
(Definition~\ref{def:horizontal-visibility}) in the height sequence of the
$a$-runs and output the corresponding \rev{unary-border family after the
maximality test of Lemma~\ref{lem:unary-family}}.

\begin{lemma}\label{lem:unary-enum}
All unary-border families can be enumerated in $O(m\log\sigma)$ time with
$O(m)$ working space.  Since each distinct character occurring in $T$ labels at least one run, $\sigma\le m$.
\end{lemma}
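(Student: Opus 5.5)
We group the runs of $\RLE(T)$ by character and then, for each character separately, compute all horizontally visible pairs with a stack.

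\emph{Grouping.} We scan $\RLE(T)$ once and collect, for every character $a$, the list $r_1<\cdots<r_k$ of run indices with $c_{r_p}=a$, together with the heights $h_p=e_{r_p}$ and the outside characters $\Lout(r_p)$ and $\Rout(r_p)$. Reading $\Lout$ and $\Rout$ takes $O(1)$ per run, using the sentinels $\#_L,\#_R$. To group the runs we sort the $m$ run indices stably by character. A balanced search tree (or comparison sorting restricted to the $\sigma$ distinct keys) mapping each character to its bucket gives $O(m\log\sigma)$ time and $O(m)$ space. Stability keeps the indices inside each bucket in increasing order. The total length of all lists is $m$.

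\emph{Visible pairs.} For each list we enumerate the edges of the horizontal visibility graph of $h_1,\ldots,h_k$ with the standard monotone-stack sweep. We process $p=1,\ldots,k$ left to right and keep a stack of indices whose heights are strictly decreasing from bottom to top. When $p$ arrives, we repeatedly examine the top $q$:
\begin{itemize}
\item if $h_q<h_p$, we report $(q,p)$ and pop $q$;
\item if $h_q=h_p$, we report $(q,p)$, pop $q$, and stop;
\item if $h_q>h_p$, we report $(q,p)$ and stop.
\end{itemize}
Finally we push $p$.

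Correctness is the classical argument. A pair $(q,p)$ is visible exactly when every intermediate height is $<\min\{h_q,h_p\}$. The indices remaining on the stack just before $p$ is processed are precisely the $q<p$ whose later heights up to $p-1$ are all smaller than $h_q$. Among these, the ones visible from $p$ are the popped ones plus the first one not below $h_p$. For each reported pair we also need $M_{q,p}$. This is the height of the element popped just before reaching $q$, or $0$ if nothing was popped, because the popped elements are exactly the intermediate maxima. From it we get $I_{q,p}=[M_{q,p}+1,\min\{h_q,h_p\}]$ in $O(1)$.

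\emph{Output.} For each reported pair we apply the maximality test of Lemma~\ref{lem:unary-family}: we remove $R$ if $h_q=R$ and $\Lout(r_q)=\Lout(r_p)$, or if $h_p=R$ and $\Rout(r_q)=\Rout(r_p)$. If the resulting interval is non-empty, we output $\mathcal U_a(r_q,r_p,J)$. Each index is pushed and popped at most once, and each step reports at most one pair, so the work for a list of size $k$ is $O(k)$. This matches the bound of $2k-3$ edges from Lemma~\ref{lem:hvg-linear}. Summing over all characters gives $O(m)$ for this phase, so the grouping step dominates at $O(m\log\sigma)$. The stack, the lists and the output buffer use $O(m)$ space. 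Note that $\sigma\le m$ holds trivially.

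The step I expect to need the most care is the proof that the stack sweep reports exactly the visible pairs, with the strict inequality treated correctly when heights are equal, and that $M_{q,p}$ is correctly read off from the last popped element.
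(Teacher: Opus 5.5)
Your proposal is correct and follows the paper's proof: grouping the runs by character with a balanced search tree in $O(m\log\sigma)$ time, a left-to-right monotone-stack sweep per character in which each run is pushed and popped at most once, and a constant-time maximality test of Lemma~\ref{lem:unary-family} for each visible pair. Your reading of $M_{q,p}$ from the element popped just before reaching $q$ (or $0$ if nothing was popped) is right, and it makes explicit a step the paper leaves implicit, since the paper computes $J$ only ``from the two endpoint run lengths and outside characters.''
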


\begin{proof}
Group the runs by their character in $O(m\log\sigma)$ time using a balanced
search tree.  For each character $a$, enumerate the horizontal visibility graph
of its run-length sequence (Definition~\ref{def:horizontal-visibility}) by a left-to-right monotone-stack scan.  Each run is pushed once and popped at most once, so the scan and the reported visibility edges take linear time in the number of $a$-runs.  For each visible pair, compute the interval
$J$ of Lemma~\ref{lem:unary-family} in constant time from the two endpoint run
lengths and outside characters, and output the family if $J$ is non-empty.
Since the character classes partition the runs, the total work after grouping
is $O(m)$ and the working space is $O(m)$.
\qed
\end{proof}

\section{Enumeration of Non-Periodic MCSs with Non-Unary Borders}\label{sec:multi-run}

Here, we enumerate Type~4 MCS occurrences $XUX$ whose longest border $X$
has $\rho(X)\ge 2$.  Write the border as $X=a^KY$, where $a^K$ is the \rev{first
run of $X$}.  Then $K\ge 1$, and the remaining suffix $Y$ is non-empty and
starts at a run boundary.

\paragraph{Overview.}
The enumeration has four steps.  First, fixing $Y$ places its occurrences at
one explicit locus $v$ of the sparse suffix tree, and the two endpoint
leaves of a \rev{valid pair} lie in distinct children of $v$.  Second, for
each preceding character $a$, all leaves below $v$ are mapped to points whose
horizontal coordinate is the occurrence position and whose height is the length
of the preceding $a$-run.  Horizontal visibility in this full point set is
exactly the absence of an intermediate occurrence of the candidate border
$a^KY$.  Third, a constant-time test checks left maximality and a length test
discards the periodic cases.  Finally, at every node $v$,
we keep a largest child and query all leaves in the other children against the
full point set of $v$.  Every queried leaf then belongs to a child containing at
most half of the leaves of $v$, which gives the same doubling argument as in
standard smaller-to-larger merging.  The remainder of this section formalizes
these steps and proves correctness and the output-sensitive bound.

\subsection{Representation at a Suffix-Tree Locus}

For any suffix-tree node $x$, let $L(x)$ denote the set of leaves in the subtree rooted at $x$.
Let $v$ be the locus of $Y$ in $\RLEST(T)$.  A leaf
$\ell\in L(v)$ represents an occurrence of $Y$ starting at a run boundary.
Besides $\pchar(\ell)$ and $\plen(\ell)$, we use $\Lout(\ell)$ for the character
immediately before the preceding run, or the left sentinel $\#_L$ at the left end of $T$.
This value is obtained in constant time from the run represented by $\ell$.

When $\rho(X)=2$, we have $\rho(Y)=1$.  This case is handled without any
special treatment: the leaf set of the locus of $Y$ consists exactly of the
run-start suffixes having $Y$ as a prefix, and hence automatically excludes
runs that are too short to contain $Y$.

Consider two leaves
$\ell,z\in L(v)$ lying below distinct children of $v$.  If
$\pchar(\ell)=\pchar(z)=a\ne\bot$, then they define the candidate border
$a^KY$ with $K=\min(\plen(\ell),\plen(z))$.  The candidate is left-maximal
exactly when $\plen(\ell)\ne\plen(z)$ or $\Lout(\ell)\ne\Lout(z)$: if the
preceding run lengths differ, the occurrence adjacent to the shorter run starts
at its left boundary whereas the other is preceded by $a$; if they are equal,
both start at their run boundaries and the outside characters decide left
maximality.

\begin{lemma}\label{lem:locus-representation}
Every valid pair $(X,i,j)$ with $\rho(X)\ge 2$ has a unique representation $(v,\ell,z,K)$ as follows: writing its
longest border as $X=a^KY$ with $Y\ne\eps$, the leaves $\ell$ and $z$, ordered
by $\pos(\ell)<\pos(z)$, represent the two occurrences of $Y$, $v$ is the
(explicit) locus of $Y$ in the sparse suffix tree, the two leaves lie below
distinct children of $v$, and $K=\min(\plen(\ell),\plen(z))$.
\end{lemma}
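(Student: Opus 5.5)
Given a valid pair $(X,i,j)$ with $\rho(X)\ge 2$, I will first build the tuple $(v,\ell,z,K)$ and check each of the required properties. Afterwards I will argue that it is unique.

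\textbf{Decomposition and leaves.} Since $\rho(X)\ge2$, $X$ has a first run $a^K$ with $K\ge1$, and we write $X=a^KY$ with $Y\ne\eps$ starting with a character $b\ne a$. In the first occurrence of $X$, the position $i+K$ is the leftmost RLE boundary inside that occurrence. Hence $i+K\in\mathcal B_T\subseteq\mathcal S_T$, and likewise $j+K\in\mathcal S_T$, because $T[j+K-1]=a\ne b=T[j+K]$. So both occurrences of $Y$ begin at run starts and correspond to leaves $\ell,z$ of $\RLEST(T)$. Since $i<j$, we get $\pos(\ell)<\pos(z)$, and both $\pchar(\ell)$ and $\pchar(z)$ equal $a$. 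Let $v$ be the locus of $Y$.

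\textbf{Explicitness of $v$ and distinct children.} By right maximality (Condition~3), either $j+|X|-1=|T|$ or $T[i+|X|]\ne T[j+|X|]$. In the first case the suffix at $z$ is $Y\$$, so its next symbol is $\$$, which differs from the next symbol of the suffix at $\ell$. That symbol is a character of $T$, because $i+|X|-1<|T|$ (as $i<j$). In either case the two suffixes diverge exactly at depth $|Y|$. Therefore the locus of $Y$ is a branching, hence explicit, node $v$, and $\ell,z$ lie below distinct children of $v$.

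\textbf{The value of $K$.} Because the two occurrences of $X$ share the prefix $a^K$, we have $\plen(\ell)\ge K$ and $\plen(z)\ge K$. Suppose both exceeded $K$. Then $T[i-1]=T[j-1]=a$ (in particular $i>1$), which contradicts left maximality (Condition~2). Hence $K=\min(\plen(\ell),\plen(z))$.

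\textbf{Uniqueness.} This step needs the most care, although it is essentially definitional. The valid pair determines $X$ and its run decomposition, so $a$, $K$ and $Y$ are determined. Then $v$ is determined as the locus of $Y$. The leaves are determined as the run-start leaves with positions $i+K$ and $j+K$: any representation satisfying the stated conditions must have $\ell,z$ representing the occurrences of $Y$ inside these two occurrences of $X$, and the ordering $\pos(\ell)<\pos(z)$ fixes which leaf is which.

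If "unique" is meant as injectivity of the map from valid pairs to tuples, this also holds. The tuple recovers $Y=\lab(v)$, $a=\pchar(\ell)$, $K$, and then $i=\pos(\ell)-K$ and $j=\pos(z)-K$, so distinct valid pairs yield distinct tuples.
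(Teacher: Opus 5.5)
Your proof is correct and follows the same route as the paper. Both show that the two occurrences of $Y$ begin at run starts, that right maximality places the two leaves below distinct children (so the locus is explicit), and that left maximality forces one preceding $a$-run to have length exactly $K$; uniqueness is then essentially definitional. Your explicit treatment of the end-of-text case via the terminal $\$$ is a small detail the paper leaves implicit.
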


\begin{proof}
Take the longest border $X$ of the valid pair and let $a^K$ be the \rev{first run
of $X$}.  Write $X=a^KY$.  Then the non-empty suffix
$Y=X[K+1..|X|]$ is uniquely determined and, in both endpoint occurrences of
$X$, the corresponding occurrence of $Y$ starts at a run boundary.  \rev{Since the valid pair satisfies the right-maximality condition, the corresponding occurrences of
$Y$ cannot be followed by the same character.}  Hence their leaves lie below
distinct children of the locus of $Y$; in particular, this locus is explicit.
The two occurrences of $Y$ are preceded by $a$-runs of length at least $K$,
and left maximality implies that one of these runs has length exactly $K$.  If
the two preceding-run lengths are equal, validity requires the outside
characters to be different.  Thus $v$, $\ell$, $z$, and $K$ are uniquely
determined.
\qed
\end{proof}

\subsection{Visibility Reporting}
We use the horizontal visibility relation of Definition~\ref{def:horizontal-visibility}.
For each leaf $y$, let $p_y=(\pos(y),\plen(y))$ be its associated point and
store with $p_y$ a pointer to $y$.  For a node $v$ and a preceding character
$a$, define
\[
P_a(v)=\{p_y:y\in L(v),\ \pchar(y)=a\}.
\]
For two leaves $\ell,z\in L(v)$ with $p_\ell,p_z\in P_a(v)$, let
$K=\min\{\plen(\ell),\plen(z)\}$.  The two occurrences of
$a^K\lab(v)$ are consecutive if and only if no intermediate point in $P_a(v)$
has height at least $K$, or equivalently, if $p_\ell$ and $p_z$ are visible.
Visibility is therefore always tested in the full point set of the 
locus.  Leaves in the same child \rev{cannot satisfy the right-maximality condition}, but they
remain in $P_a(v)$ because they may block visibility between different
children.  Figure~\ref{fig:multi-run-visibility} illustrates this situation.

\begin{figure}[tb]
\centering
\begin{tikzpicture}[
  x=0.62cm,y=0.62cm,
  nodept/.style={circle, draw, fill=white, inner sep=1.6pt},
  leaf/.style={rectangle, draw, rounded corners=1pt, fill=gray!10, inner sep=2pt},
  lightchild/.style={draw=orange!70!black, fill=orange!12, rounded corners=2pt},
  heavychild/.style={draw=blue!70!black, fill=blue!8, rounded corners=2pt},
  point/.style={circle, fill=black, inner sep=1.4pt},
  querypt/.style={circle, fill=orange!80!black, inner sep=1.7pt},
  heavypoint/.style={circle, fill=blue!70!black, inner sep=1.7pt},
  samept/.style={circle, fill=red!70!black, inner sep=1.6pt},
  good/.style={very thick, blue!70!black},
  bad/.style={very thick, red!70!black, dashed}
]

\node[nodept] (v) at (3.25,4.05) {};
\node[above=2pt of v] {$v$};
\node[anchor=east] at (2.30,4.05) {$\lab(v)=Y$};

\node[nodept, lightchild] (u1) at (0.95,2.70) {};
\node[nodept, heavychild] (h)  at (3.25,2.70) {};
\node[nodept, lightchild] (u2) at (5.55,2.70) {};
\draw (v) -- (u1);
\draw[very thick, blue!70!black] (v) -- (h);
\draw (v) -- (u2);
\node[below left=1pt and 1pt of u1] {$u_1$};
\node[below left=1pt and 1pt of h] {$h$};
\node[below right=1pt and 1pt of u2] {$u_2$};
\node[orange!70!black, left=3pt of u1] {query};
\node[blue!70!black, right=3pt of h] {largest};
\node[orange!70!black, right=3pt of u2] {query};

\node[leaf, lightchild] (ell) at (0.20,1.20) {$\ell$};
\node[leaf, lightchild] (y)   at (1.65,1.20) {$y$};
\node[leaf, heavychild] (z)   at (3.25,1.20) {$z$};
\node[leaf, lightchild] (w)   at (5.55,1.20) {$w$};
\draw (u1) -- (ell);
\draw (u1) -- (y);
\draw (h) -- (z);
\draw (u2) -- (w);

\draw[bad, ->]
  (ell.south) .. controls (0.42,0.78) and (1.22,0.82) ..
  (y.south);
\node[red!70!black, fill=white, inner sep=1pt] at (1.48,0.56) {discard};
\draw[good, ->]
  (ell.south) .. controls (0.95,-0.50) and (2.15,-0.78) ..
  (z.south);
\node[blue!70!black, fill=white, inner sep=1pt] at (2.62,-0.26) {report};

\node at (3.20,-1.15) {\textbf{(a) Suffix-tree locus $v$}};

\begin{scope}[shift={(9.7,0)}]
\draw[->] (0,1.20) -- (6.25,1.20) node[right] {$\pos$};
\draw[->] (0,1.00) -- (0,3.40) node[above] {$\plen$};

\node[querypt,label=above left:{$\ell$}] (pell) at (0.80,2.80) {};
\node[samept,label=below right:{$y$}] (py) at (2.15,2.15) {};
\node[point] at (3.05,1.78) {};
\node[heavypoint,label=above right:{$z$}] (pz) at (4.35,2.45) {};
\node[querypt,label=above right:{$w$}] (pw) at (5.35,1.55) {};
\node[point] at (5.95,3.00) {};

\draw[densely dashed, gray] (0.00,2.45) -- (4.35,2.45);
\node[anchor=east] at (-0.12,2.45) {$K$};
\draw[good] (pell) -- (pz);
\draw[bad] (pell) -- (py);

\node at (3.15,-1.15) {\textbf{(b) Full point set $P_a(v)$}};
\end{scope}

\end{tikzpicture}
\caption{Non-unary-border enumeration at a suffix-tree locus $v$. Leaves in non-largest children are queried against the full point set $P_a(v)$. Points in the same child remain as potential blockers, but same-child reports are discarded. A visible leaf $z$ in a different child yields the candidate border $a^K Y$, where $K=\min\{\plen(\ell),\plen(z)\}$.}
\label{fig:multi-run-visibility}
\end{figure}

Fixing only $Y$, rather than the whole border $a^KY$, leaves $K$ undetermined,
so one leaf may be visible from several others.  We report all such points
output-sensitively by the following standard three-sided search procedure.

\begin{lemma}[three-sided visibility reporting]\label{lem:visibility-reporting}
Let $P$ be a finite set of points with distinct $x$-coordinates, where each
point $p\in P$ has height $h(p)$.  Suppose that $P$ is stored in a structure
supporting
$\operatorname{Succ}(s,H)=\arg\min_x\{(x,h)\in P:x>s,\ h>H\}$
and the symmetric predecessor query in $O(\log |P|)$ time, with the convention
that either query returns $\mathsf{nil}$ if no such point exists.  Then, for any
query point $q\in P$, all points of $P$ visible from $q$ can be reported in
$O((1+k_q)\log |P|)$ time, where $k_q$ is the output size.
\end{lemma}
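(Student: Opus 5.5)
The plan is to handle the two sides of $q$ separately and symmetrically, reporting the points visible to the right of $q$ by a sequence of $\operatorname{Succ}$ queries whose thresholds change as the scan proceeds. I would first characterize the right-visible points. Let $q=(x_0,h_0)$ and let $r_1,r_2,\ldots$ be the points to the right of $q$ that are \emph{strict prefix maxima} of height as we move right, i.e., $r_t$ is visible from $q$ exactly when every point strictly between $q$ and $r_t$ is lower than $\min\{h_0,h(r_t)\}$. Unfolding Definition~\ref{def:horizontal-visibility}: a point $r$ with $x(r)>x_0$ is visible from $q$ if and only if (i) every intermediate point is lower than $h_0$, and (ii) every intermediate point is lower than $h(r)$. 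Condition (ii) says that $r$ is a strict prefix maximum of the heights to the right of $q$. Condition (i) says that no earlier point reaches $h_0$. So the visible points on the right are the strict running maxima starting from $x_0$, stopping immediately after the first point of height at least $h_0$; that point, if it exists, is itself visible and is the last one.

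The iteration is then straightforward. Set $s=x_0$ and $H=-\infty$; in practice use $H$ below every height, or take the immediate successor by $x$, which is $\operatorname{Succ}(s,-\infty)$. Repeat the following: $r=\operatorname{Succ}(s,H)$. If $r=\mathsf{nil}$, stop. Otherwise report $r$. If $h(r)\ge h_0$, stop. Else set $s=x(r)$ and $H=h(r)$.

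By construction, $\operatorname{Succ}(s,H)$ returns the leftmost point after $s$ strictly higher than the current running maximum $H$. That point is exactly the next strict prefix maximum, and every point skipped over has height at most $H<h_0$. An induction on the iterations then shows that the reported points are precisely those satisfying (i) and (ii). Correctness requires checking two things: every reported $r$ has all intermediate heights at most $H<h(r)$ and below $h_0$, and no visible point is skipped, since any visible point must exceed all earlier ones and is therefore the leftmost such. The left side is symmetric, using the predecessor query.

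For the time bound, each query either reports a new point or returns $\mathsf{nil}$/terminates, so there are at most $k_q+2$ queries overall, one terminating query per side. This gives $O((1+k_q)\log|P|)$. The main subtlety I expect is the equality cases: visibility uses strict inequality against $\min$. A point of height exactly equal to the running maximum blocks later points, and the query ``$h>H$'' correctly skips such a point as non-visible. Indeed it is not visible, since the earlier maximum has the same height, and it still blocks. A point of height $\ge h_0$ must be reported and then must terminate the scan. I would verify these boundary cases carefully.
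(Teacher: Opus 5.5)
Your proof is correct and follows essentially the same route as the paper: on each side you run threshold-raising $\operatorname{Succ}$ (respectively predecessor) queries, report the strict running maxima, and stop at $\mathsf{nil}$ or at the first point of height at least $h(q)$, which gives one query per report plus one terminating query per side. The only cosmetic difference is that you advance the search start to $x(r)$, whereas the paper keeps querying $\operatorname{Succ}(x(q),H)$; both return the same point, because every point skipped so far lies at or below the current threshold.
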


\begin{proof}
Consider the points to the right of $q$.  Starting with $H=-\infty$, repeatedly
query $\operatorname{Succ}(x(q),H)$.  If it returns $\mathsf{nil}$, stop.
Otherwise report the returned point $r$; if $h(r)\ge h(q)$, stop, and otherwise
set $H\gets h(r)$ and continue.  Each report raises the threshold to the height
of a visible point, and every skipped point is blocked by a previously reported
point of at least the same height.  Thus there is one query per reported point,
plus one final query.  The left side is symmetric.
\qed
\end{proof}

We implement these queries by McCreight's balanced priority search
tree~\cite{mccreight1985priority}. Fix $N=n+1$ throughout and replace each height $h$ by $N-h$.  \rev{For $H=-\infty$, we omit the height constraint; otherwise $h>H$ becomes}
the three-sided condition $N-h\le N-H-1$, so successor and predecessor reduce to
\textsc{MinXInRectangle} and \textsc{MaxXInRectangle}, respectively, with
$O(\log |P|)$ update and query time and linear space.  Here,
\textsc{MinXInRectangle}$(R)$ (respectively, \textsc{MaxXInRectangle}$(R)$)
returns a point of minimum (respectively, maximum) $x$-coordinate in $P\cap R$,
if one exists.

The point sets are maintained bottom-up and are not rebuilt from scratch.  With
each processed node $x$ we store a dictionary $\mathcal D_x$ indexed by
preceding characters; it is implemented by a balanced binary search
tree~\cite{cormen2022algorithms}.  The entry $\mathcal D_x[a]$ is a priority
search tree storing exactly $P_a(x)$.  For a leaf $y$, $\mathcal D_y$ consists
of $p_y$ in the entry for $\pchar(y)$, unless $\pchar(y)=\bot$.  At an internal
node $v$, choose a child $h$ with maximum leaf-set size, keep $\mathcal D_h$,
and merge every other child dictionary into it to obtain $\mathcal D_v$.

\begin{lemma}\label{lem:outside-visible}
Fix an internal node $v$, one child $u$, and a character
$a$.  Let $q$ be the number of leaves in $L(u)$ with preceding character $a$,
and let $g$ be the number of visible pairs $(\ell,z)$ that pass the
left-maximality test, where $\ell\in L(u)$, $z\in L(v)\setminus L(u)$, and both
leaves have preceding character $a$.  These pairs can be reported in
$O((q+g)\log m)$ time.
\end{lemma}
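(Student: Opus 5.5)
For each of the $q$ leaves $\ell\in L(u)$ with $\pchar(\ell)=a$, I will run the procedure of Lemma~\ref{lem:visibility-reporting} on the full point set $P=P_a(v)$, with query point $p_\ell$. This is supported by the priority search tree $\mathcal D_v[a]$ built from McCreight's structure. Since $|P|\le m$, each \textsc{MinXInRectangle}/\textsc{MaxXInRectangle} call costs $O(\log m)$. For each returned visible point $p_z$, I read the stored pointer to $z$ and do two constant-time checks. The first is whether $z\in L(u)$; by preorder leaf numbering this is an interval test against the leaf range of $u$. The second is the left-maximality test: $\plen(\ell)\ne\plen(z)$ or $\Lout(\ell)\ne\Lout(z)$. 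A pair passing both tests is reported.

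The cost is $O((1+k_\ell)\log m)$ per query, where $k_\ell$ is the number of points visible from $p_\ell$. The difficulty is that $k_\ell$ counts all visible points, including same-child ones and ones failing left-maximality, while the bound must be in terms of $g$. So the main step is a charging argument showing that discarded reports number $O(1)$ per query, plus $O(1)$ per reported pair.

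First, the left-maximality failures. In the rightward scan of Lemma~\ref{lem:visibility-reporting}, the reported heights are strictly increasing, and the scan stops at the first point of height $\ge h(p_\ell)$. So at most one visible point on each side has height equal to $\plen(\ell)$, and only such a point can fail the test ($\plen$ equal and $\Lout$ equal). That gives at most two failures per query.

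Second, the visible points lying in $L(u)$ itself. This is the step I expect to be the real obstacle, because one leaf of a light child can see many same-child points of increasing height before reaching a point outside $u$. I would first try to avoid charging them to $g$ at all. The idea is that the $q$ leaves of $u$ together see at most $O(q)$ same-child points in total. Restricted to the points of $L(u)$ with character $a$, visibility in $P_a(v)$ implies visibility in the subset $P_a(v)\cap L(u)$, since removing blockers only adds visible pairs. By Lemma~\ref{lem:hvg-linear}, the horizontal visibility graph on these $q$ points has at most $2q-3$ edges. Each same-child visible pair found by a query from $\ell$ is such an edge, and each edge is found at most twice (once from each endpoint). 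Hence same-child reports total $O(q)$ over all queries from $u$.

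Adding the three contributions gives at most $q$ final nil or stopping queries, $O(q)$ same-child reports, $2q$ left-maximality failures, and $g$ useful reports. The total number of priority-search-tree operations is therefore $O(q+g)$, each costing $O(\log m)$, which gives the claimed $O((q+g)\log m)$ bound.

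Finally, duplicates. A pair $(\ell,z)$ with $\ell\in L(u)$ and $z\notin L(u)$ is discovered only from $\ell$'s query, so each is reported once. Correctness follows from Lemma~\ref{lem:visibility-reporting}, which returns exactly the visible points, together with the equivalence stated above between visibility in $P_a(v)$ and consecutiveness of the two occurrences of $a^K\lab(v)$.
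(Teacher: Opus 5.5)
Your proposal is correct and follows essentially the same route as the paper. Like the paper, you run the three-sided search of Lemma~\ref{lem:visibility-reporting} from each of the $q$ query leaves against the full set $P_a(v)$, bound the discarded same-child reports by the $O(q)$ edges of the horizontal visibility graph on those points (Lemma~\ref{lem:hvg-linear}), and note that each side yields at most one equal-height left-maximality failure. Your explicit remark that visibility in $P_a(v)$ implies visibility among the points from $L(u)$ fills in a step the paper leaves implicit. The count of final queries is $2q$ rather than $q$, which does not affect the bound.
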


\begin{proof}
\emph{Correctness.}
Apply Lemma~\ref{lem:visibility-reporting} to $\mathcal D_v[a]$.  For each
queried leaf in $L(u)$, discard reports in the same child and test left
maximality.  Visibility gives consecutiveness; \rev{a report from a different child gives right maximality of the candidate border}; and the final test gives left maximality.

\emph{Complexity.}
Same-child reports form edges of a horizontal visibility graph on the $q$
query leaves and hence contribute $O(q)$ pairs by Lemma~\ref{lem:hvg-linear}.
A left-maximality failure has the same height and outside character as the
queried leaf.  On each side at most one such leaf can be visible, since the
nearest one blocks all farther leaves of the same height.  Hence the total
number of reports is $O(q+g)$, each obtained in $O(\log m)$ time.
\qed
\end{proof}

\subsection{Bottom-Up Enumeration}

We process the sparse suffix tree bottom-up.  At an internal node $v$,
choose a largest child $h$, build $\mathcal D_v$ by keeping $\mathcal D_h$ and
merging all other child dictionaries, and query every leaf in each child
$u\ne h$ against the full set $\mathcal D_v[a]$ for its preceding character
$a$.  Same-child reports are discarded.  A pair between two non-largest
children may be reported from both endpoints, so we keep only the orientation
with the smaller text position; a pair involving $h$ is reported only from its
non-largest endpoint.  After the left-maximality test, the final length test
removes periodic candidates.  Algorithm~\ref{alg:multi-run-locus} gives the
full procedure.

For every suffix-tree node, we store the interval of $\RLESA(T)$ occupied by the leaves in its subtree, and for every leaf we store its index in $\RLESA(T)$. Since the leaves of a subtree are consecutive in suffix-array order, this permits constant-time membership tests in any child subtree and linear-time scanning of all leaves in such a subtree.

For Algorithm~\ref{alg:multi-run-locus}, let \textsc{LeftMax}$(\ell,z)$ abbreviate $\plen(\ell)\ne\plen(z)$ or $\Lout(\ell)\ne\Lout(z)$.

For a valid pair, $X$ is the longest border of its spanned MCS $W$, and hence $\per(W)=|W|-|X|=j-i$. Therefore $\expn(W)<2$ if and only if $|X|<j-i$, which is the final length test in Algorithm~\ref{alg:multi-run-locus}.

\begingroup
\SetAlgoNoLine
\SetAlgoNoEnd
\begin{algorithm}[htb]
\DontPrintSemicolon
\SetKwFunction{ProcessNode}{ProcessNode}
\SetKwProg{Proc}{procedure}{:}{}
\caption{\rev{Non-unary-border enumeration at a node $v$ of the sparse suffix tree.}}\label{alg:multi-run-locus}
\KwIn{An internal node $v$ whose children have already been processed.}
\KwOut{The dictionary $\mathcal D_v$ and the output families generated at $v$.}
\Proc{\ProcessNode{$v$}}{
  let $h$ be a child of $v$ with maximum leaf-set size\;
  form $\mathcal D_v$ by keeping $\mathcal D_h$ and merging every $\mathcal D_u$, $u\ne h$, into it\;
  $Y \gets \lab(v)$\;
  \If{$Y=\eps$}{
    \KwRet $\mathcal D_v$\;
  }
  \ForEach{child $u$ of $v$ with $u \ne h$}{
    \ForEach{leaf $\ell\in L(u)$}{
      $a \gets \pchar(\ell)$\;
      \If{$a\ne\bot$}{
        \ForEach{point $p_z$ visible from $p_\ell$ in $\mathcal D_v[a]$ (with associated leaf $z$)}{
          \If{$z\in L(u)$}{
            \KwContinue\;
          }
          \If{$z\notin L(h)$ and $\pos(\ell)>\pos(z)$}{
            \KwContinue\;
          }
          \If{\textnormal{\textsc{LeftMax}}$(\ell,z)$}{
            $K \gets \min(\plen(\ell),\plen(z))$\;
            $i \gets \min(\pos(\ell),\pos(z))-K$\;
            $j \gets \max(\pos(\ell),\pos(z))-K$\;
            \If{$|a^KY|<j-i$}{
              \textbf{output} $\mathsf{S}(i,j+|a^KY|-1)$\;
            }
          }
        }
      }
    }
  }
  \KwRet $\mathcal D_v$\;
}
\end{algorithm}
\endgroup

Each visibility loop uses Lemma~\ref{lem:visibility-reporting}; the output
$\mathsf S(i,j+|a^KY|-1)$ is the singleton family containing the MCS occurrence
spanned by the two border occurrences.

\subsection{Correctness and Complexity}

\begin{lemma}\label{lem:multi-run-correct}
The above enumeration outputs exactly the singleton families corresponding
to valid pairs $(X,i,j)$ with $\rho(X)\ge 2$ and
$\expn(\spanstr(X,i,j))<2$, without duplication.
\end{lemma}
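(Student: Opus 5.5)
We prove soundness (every output is such a valid pair), completeness (every such valid pair is output), and uniqueness (each is output once). The backbone is Lemma~\ref{lem:locus-representation}, which gives every valid pair $(X,i,j)$ with $\rho(X)\ge2$ a unique representation $(v,\ell,z,K)$.

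\emph{Soundness.} Suppose the algorithm outputs $\mathsf S(i,j+|a^KY|-1)$ at node $v$ from leaves $\ell,z$, where $z$ is visible from $\ell$ in $\mathcal D_v[a]$ and $z\notin L(u)$. Both leaves lie in $L(v)$, so both occurrences of $Y=\lab(v)$ start at run boundaries and are preceded by $a$-runs of lengths $\plen(\ell)$ and $\plen(z)$. Hence $a^KY$ with $K=\min(\plen(\ell),\plen(z))$ occurs at $i$ and $j$.

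We then check the three conditions of Definition~\ref{def:valid-pair}.
\begin{enumerate}
\item Consecutiveness: any occurrence of $a^KY$ starting strictly between $i$ and $j$ has its $Y$-part starting at a run boundary, because $K\ge1$ and the preceding character is $a\ne$ first character of $Y$. It therefore corresponds to a leaf in $L(v)$ with preceding character $a$ and $\plen\ge K$, i.e.\ an intermediate point of $P_a(v)$ of height at least $K$. Visibility excludes exactly this.
\item Right maximality: $\ell$ and $z$ lie below distinct children of $v$, so the characters following the two occurrences of $Y$ differ, or one occurrence hits the end $\$$.
\item Left maximality: this is exactly the \textsc{LeftMax} test, as argued before Lemma~\ref{lem:locus-representation}.
\end{enumerate}
So $(a^KY,i,j)$ is a valid pair. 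Its border has $\rho\ge2$ because $Y$ begins with a character other than $a$. By Lemma~\ref{lem:valid-pair-mcs}, $a^KY$ is the longest border of the spanned MCS. Since $\per(W)=j-i$, the length test $|a^KY|<j-i$ is equivalent to $\expn(W)<2$.

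\emph{Completeness.} Given such a valid pair, take its representation $(v,\ell,z,K)$ from Lemma~\ref{lem:locus-representation}; here $v$ is explicit and $Y\neq\eps$, so $v$ is not skipped at the root. Let $u_\ell,u_z$ be the distinct children containing $\ell,z$. Consecutiveness of the two $X$-occurrences means no intermediate leaf of $P_a(v)$ has height $\ge K=\min$ of the endpoint heights, so $p_\ell$ and $p_z$ are visible in the full set $\mathcal D_v[a]$. By Lemma~\ref{lem:visibility-reporting}, querying either endpoint reports the other.

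At least one of $u_\ell,u_z$ differs from the largest child $h$. If exactly one does, the query from that endpoint is not discarded by the orientation filter, because the other endpoint lies in $L(h)$. If neither is $h$, the query from the endpoint with smaller $\pos$ survives the filter. In every case \textsc{LeftMax} holds by validity and the length test holds by $\expn<2$, so the pair is output. We must also check that $\mathcal D_v[a]$ equals $P_a(v)$ when queried; this holds because the merge is done before the queries and children are processed bottom-up.

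\emph{Uniqueness.} By Lemma~\ref{lem:locus-representation} a valid pair can only be generated at its own locus $v$ and from its own leaf pair $\{\ell,z\}$. At $v$, the filter leaves exactly one querying orientation: a pair involving $h$ is queried only from its non-$h$ endpoint, since $h$'s leaves are never queried, and a pair between two non-largest children is kept only from the endpoint with smaller position. Distinct leaf pairs at the same $v$ give distinct $(i,j)$, or distinct borders, so the outputs differ.

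The main obstacle is the consecutiveness equivalence in the soundness step: showing that every intermediate occurrence of $a^KY$ must appear as a point of $P_a(v)$. The argument is that its $Y$-suffix starts at a run boundary (it is preceded by $a$ and begins with another character), so it is a run-start suffix with prefix $Y$, hence a leaf of $L(v)$ with preceding run of character $a$ and length $\ge K$. We treat this step carefully, including the case where an endpoint is at the start of $T$; there $\pchar=\bot$ and the leaf is excluded, which is correct because that occurrence has no preceding $a$.
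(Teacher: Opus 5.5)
Your proof is correct and follows the paper's own argument. Soundness comes from distinct children (right maximality), visibility in the full set $P_a(v)$ (consecutiveness), the \textsc{LeftMax} test and the length test; completeness and uniqueness come from Lemma~\ref{lem:locus-representation} together with the largest-child and smaller-position filters, and you additionally spell out why every intermediate occurrence of $a^KY$ appears as a point of $P_a(v)$ of height at least $K$ and why $a^KY$ is the longest border, so that $\per(W)=j-i$.
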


\begin{proof}
\emph{Soundness.}
Every output pair consists of leaves below distinct children of the 
locus $v$ of $Y$, so \rev{the candidate border is right-maximal}.  Visibility
is tested in the full point set $P_a(v)$ and therefore excludes every
intermediate occurrence of $a^KY$.  The explicit test on $\plen$ and $\Lout$
gives left maximality.  Hence the pair is valid.  Since $Y\ne\eps$, the border
has at least \rev{two runs}, and the final length test \rev{removes periodic spans}.

\emph{Completeness and uniqueness.}
\rev{Take a valid pair of this type with a non-periodic span.}  By
Lemma~\ref{lem:locus-representation}, its longest border is uniquely
$X=a^KY$, and its two leaves lie below distinct children of the locus $v$ of
$Y$.  At least one of these children is not the chosen largest child $h$.
Therefore the pair is reported from a queried endpoint.  If exactly one
endpoint lies in $h$, it is kept from the other endpoint.  If both endpoints lie
in non-largest children, it is reported twice and exactly the orientation with
the smaller text position is kept.  Thus the pair is output exactly once after
passing the left-maximality and non-periodicity tests.
\qed
\end{proof}

\begin{lemma}\label{lem:multi-run-time}
The non-unary Type~4 families can be enumerated in $O(m\log^2 m+f\log m)$
time, where $f$ is their number, using $O(m)$ working space.
\end{lemma}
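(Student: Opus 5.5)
The bound splits into the cost of maintaining the dictionaries $\mathcal D_v$ bottom-up and the cost of the visibility queries. The preprocessing is the construction of $\RLESA(T)$, $\RLELCP(T)$ and $\RLEST(T)$ in $O(m\log m)$ time and $O(m)$ space, as in Section~\ref{sec:sparse-st}. After that we store the $\RLESA$ intervals, leaf indices, and the values $\pos$, $\pchar$, $\plen$, $\Lout$ in $O(m)$ total time. The tree has $m$ leaves and $O(m)$ nodes. String depths are available, so $|a^KY|=K+|\lab(v)|$ is computed in $O(1)$, and every test in Algorithm~\ref{alg:multi-run-locus} other than the visibility reporting takes constant time.

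\emph{Merging cost.} At each node $v$ we keep $\mathcal D_h$ for a child $h$ of maximum leaf-set size and insert the points of every other child individually. Each insertion costs $O(\log\sigma)$ for the dictionary lookup plus $O(\log m)$ for the priority-search-tree insertion~\cite{mccreight1985priority}. When a leaf's point is inserted, it moves from a child $u\ne h$ with $|L(u)|\le |L(v)|/2$ into $\mathcal D_v$. So the size of the subtree containing the leaf at least doubles each time it is moved, and each leaf is moved $O(\log m)$ times. This gives $O(m\log m)$ insertions and $O(m\log^2 m)$ total time. To enumerate the leaves of the non-largest children, we scan their $\RLESA$ intervals at no extra cost.

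\emph{Query cost.} Every leaf $\ell$ queried at $v$ lies in some non-largest child $u$, so by the same doubling argument each leaf is queried at $O(\log m)$ nodes. We apply Lemma~\ref{lem:outside-visible} to each pair $(v,u)$ and character $a$. The sum of the $q$ terms is therefore $O(m\log m)$, giving $O(m\log^2 m)$ time. The $g$ terms count visible, left-maximal pairs between different children. Each such pair is a valid pair by the soundness part of Lemma~\ref{lem:multi-run-correct}. By uniqueness in Lemma~\ref{lem:locus-representation}, it corresponds to a single node $v$, and it is reported at most twice there, once from each endpoint.

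\emph{Main obstacle.} Not every such pair is output: pairs that fail the final length test are periodic spans, i.e., Type~5. These are charged to the $O(m)$ maximal repetitions. A valid pair determines its MCS occurrence, and Lemma~\ref{lem:valid-pair-mcs} makes this correspondence injective, so there are $O(m)$ such pairs in total by Lemma~\ref{lem:type5}. Thus $\sum g = O(f+m)$, and the reporting costs $O((f+m)\log m)$.

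\emph{Space.} Every point lives in exactly one priority search tree at any time, and the trees use linear space. Processing in post-order means only the dictionaries of the current frontier are live, and their point sets are disjoint. Together with the $O(m)$-size suffix tree, the working space is $O(m)$. An explicit stack for the post-order traversal also stays within $O(m)$.
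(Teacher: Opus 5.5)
Your proposal is correct and follows essentially the same route as the paper. You use the smaller-to-larger doubling argument to get $O(m\log m)$ point insertions and $O(m\log m)$ leaf queries, apply Lemma~\ref{lem:outside-visible} per non-largest child and character, and charge the $g$ terms to valid pairs counted at most twice, with the periodic ones going to the $O(m)$ Type~5 MCSs. The only difference is cosmetic: you always reinsert points individually instead of reusing a whole priority search tree when the target dictionary lacks that character, and this does not affect the bound.
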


\begin{proof}
\emph{Queries.}
A leaf is queried at $v$ only when it belongs to a non-largest child $u$.
Since $u$ is not larger than the chosen largest child,
$|L(u)|\le |L(v)|/2$.  Hence, whenever a fixed leaf is queried, the size of
the subtree containing it at least doubles when moving from $u$ to $v$.
Consequently each leaf is queried at $O(\log m)$ nodes, for a total of
$O(m\log m)$ leaf queries.

\emph{Structures.}
At each internal node $v$, we keep the dictionary of a largest child $h$ and
merge the dictionaries of all other children into it.  For an entry
$\mathcal D_u[a]$ of a non-largest child $u$, if the current dictionary has no
entry for $a$, its priority search tree is reused directly as the new entry;
otherwise, its points are inserted individually into the existing priority
search tree for $a$.  Every such individual insertion comes from a
non-largest child, so the size of the containing subtree at least doubles
before the same point can be inserted again, and hence each point is inserted
at most $O(\log m)$ times. The same doubling argument applies when an entire priority search tree is reused as a new dictionary entry, charging this operation to an arbitrary point in that entry.  Thus there are $O(m\log m)$ dictionary
operations and point insertions in total, each costing $O(\log m)$ time in the
worst case, for a total of $O(m\log^2 m)$ time.  Since the dictionary of $h$
is reused as $\mathcal D_v$ and the other child dictionaries are discarded
after their entries have been merged, each point is stored in exactly one
current dictionary, and the working space is $O(m)$.

\emph{Reports.}
We apply Lemma~\ref{lem:outside-visible} separately to every non-largest child and preceding character. Over all these applications, the $q$ values sum to at most the total number of leaf queries, and hence to $O(m\log m)$. A pair between a largest and a non-largest child is reported from one queried endpoint, while a pair between two non-largest children is reported from at most two endpoints. \rev{Thus non-periodic spans contribute $O(f)$ to the sum of the $g$ values, while periodic spans correspond to the $O(m)$ Type~5 MCSs and contribute $O(m)$.} Hence the $g$ values sum to $O(f+m)$. Therefore the total reporting time is $O((m\log m+f+m)\log m)=O(m\log^2 m+f\log m)$, and the claimed bound follows.
\qed
\end{proof}

\section{Putting the Bounds and Enumeration Together}\label{sec:main-enumeration}

We are ready to prove Theorem~\ref{thm:main}.
\begin{proof}
The classification of Definition~\ref{def:types} covers every MCS occurrence
of length greater than one.  Lemma~\ref{lem:type5} handles Types~1 and~5,
Lemma~\ref{lem:unary-enum} handles Types~2 and~3, and
Lemma~\ref{lem:multi-run-time} handles Type~4; length-one MCS occurrences are
reported trivially.  Together with the preprocessing of
Section~\ref{sec:sparse-st}, these algorithms enumerate a representation
$\mathcal F$ in $O(m\log^2m+|\mathcal F|\log m)$ time and $O(m)$ working
space, since $\sigma\le m$ and $\alpha(m)=O(\log m)$.
Theorem~\ref{thm:family-size} gives $|\mathcal F|=O(m^2)$.
\qed
\end{proof}

\section{Conclusions and Open Problems}\label{sec:conclusions}

We gave a compact representation of all MCS occurrences in a run-length
encoded string.  Its worst-case size is $O(m^2)$, and this bound is tight for
the family representation introduced in this paper.  We also gave an
output-sensitive algorithm that computes the representation in
$O(m\log^2 m+|\mathcal F|\log m)$ time using $O(m)$ working space.  The main
structural ingredient is to represent an MCS occurrence by two consecutive
occurrences of its longest border and to exploit the RLE structure of that
border.

Several questions remain open.  First, our quadratic lower-bound construction
uses an alphabet whose size grows with $m$.  Is the $O(m^2)$ bound still tight
over a constant-size, or even binary, alphabet?  Second, can one remove one or
both logarithmic factors from the enumeration time, for example obtaining
output-linear time after near-linear preprocessing?  Finally, our lower bound
applies to the family representation defined in Section~\ref{sec:classification}.
It remains open whether a fundamentally different compact representation can
encode all MCS occurrences in $o(m^2)$ space.

\clearpage

\bibliographystyle{splncs04}
\bibliography{ref}

\end{document}